\newcommand{\okra}[1]{\left( #1 \right)}
\newcommand{\kwad}[1]{\left[ #1 \right]}
\newcommand{\klam}[1]{\left\{ #1 \right\}}
\newcommand{\boole}[1]{{\bf 1}{\klam{#1}}}
\DeclareMathOperator{\card}{\#}
\DeclareMathOperator{\sred}{\mathbf{E}}
\DeclareMathOperator{\PPM}{PPM}
\DeclareMathOperator{\LZ}{LZ}
\DeclareMathOperator*{\hilberg}{hilb}
\newtheorem{definition}{Definition}
\newtheorem{theorem}{Theorem}
\newenvironment*{proof}{\begin{trivlist}\item[]
\noindent\textbf{Proof:}}{$\Box$\par\end{trivlist}}
\newenvironment*{proof*}[1]{\begin{trivlist}\item[]
\noindent\textbf{Proof of #1:}}{$\Box$\par\end{trivlist}}
\begin{document}

\begin{titlepage}

  \title{On a Class of Markov Order Estimators \\ Based on
    PPM and Other Universal Codes}

\author{{\L}ukasz D\k{e}bowski\thanks{
    {\L}. D\k{e}bowski is with
    the Institute of Computer Science, Polish Academy of Sciences, 
    ul. Jana Kazimierza 5, 01-248 Warszawa, Poland 
    (e-mail: ldebowsk@ipipan.waw.pl).}}
\date{}

\maketitle

\begin{abstract}
  We investigate a class of estimators of the Markov order for
  stationary ergodic processes which form a slight modification of the
  constructions by Merhav, Gutman, and Ziv in 1989 as well as by Ryabko,
  Astola, and Malyutov in 2006 and 2016. All the considered estimators
  compare the estimate of the entropy rate given by a universal code
  with the empirical conditional entropy of a string and return the
  order for which the two quantities are approximately equal.
  However, our modification, which we call universal Markov orders,
  satisfies a few attractive properties, not shown by the mentioned
  authors for their original constructions.  Firstly, the universal
  Markov orders are almost surely consistent, without any
  restrictions. Secondly, they are upper bounded asymptotically by the
  logarithm of the string length divided by the entropy rate. Thirdly,
  if we choose the Prediction by Partial Matching (PPM) as the
  universal code then the number of distinct substrings of the length
  equal to the universal Markov order constitutes an upper bound for
  the block mutual information.  Thus universal Markov orders can be
  also used indirectly for quantification of long memory for an
  ergodic process.
  \\[2ex]
  \textbf{Keywords:}
  \\
  Markov order; empirical entropy; universal coding; Prediction by
  Partial Matching; mutual information
\end{abstract}


\end{titlepage}
\pagestyle{plain}   


\section{Introduction}
\label{secIntro}

Throughout this paper, we denote sequences $x_j^k=(x_i)_{j\le i\le k}$
over a finite alphabet $\mathbb{X}=\klam{a_1,a_2,...,a_D}$, where
$D\ge 2$ and $x_j^{j-1}=\lambda$ equals the empty string.  For a
stationary probability measure $P$ on infinite sequences over alphabet
$\mathbb{X}$ and random variables $X_k(x_1^\infty):=x_k$, we use the
abridged notation $P(x_1^n):=P(X_1^n=x_1^n)$ and
$P(x_j^n|x_1^{j-1}):=P(X_j^n=x_j^n|X_1^{j-1}=x_1^{j-1})$.  For the
stationary measure $P$ as above, we also define the Markov order
\begin{align}
  M^P:=\inf\klam{k\ge 0: P(x_{k+1}^n|x_1^k)
  =\prod_{i=k+1}^n P(x_i|x_{i-k}^{i-1})
  \text{ for all strings $x_1^n$}},
\end{align}
where the infimum of the empty set equals infinity,
$\inf\emptyset:=\infty$.  If the Markov order $M^P=M$ is finite,
measure $P$ is called an $M$-th order Markov measure.

Several estimators of the Markov order for stationary ergodic measures
were exhibited in the literature
\cite{MerhavGutmanZiv89,CsiszarShields00,Csiszar02,MorvaiWeiss05,RyabkoAstola06,
  CsiszarTalata06,Talata13,BaigorriGocalvesResende14,RyabkoAstolaMalyutov16,
  PapapetrouKugiumtzis16}.  The goal of this paper is to investigate a
class of simple consistent estimators of the Markov order based on
universal codes---such as the prefix-free Kolmogorov complexity
\cite{Chaitin75,LiVitanyi08,Brudno82}, the Lempel-Ziv code
\cite{ZivLempel77}, Prediction by Partial Matching (PPM)
\cite{Ryabko84en2,ClearyWitten84}, or one of many grammar-based codes
\cite{KiefferYang00,CharikarOthers05,Debowski11b}.  In fact, our
estimators are extremely close to the idea initially proposed by
Merhav, Gutman, and Ziv \cite{MerhavGutmanZiv89} and by Ryabko, Astola,
and Malyutov \cite{RyabkoAstola06,RyabkoAstolaMalyutov16} but,
tampering with fine definition details, we are able to prove easily
their almost sure consistency and a few other neat properties which we
could not find in
\cite{MerhavGutmanZiv89,RyabkoAstola06,RyabkoAstolaMalyutov16}.

As we have mentioned, the Markov order estimators constructed here
are parameterized by universal codes. Simply speaking, our estimators
compare the estimate of the entropy rate given by a universal code
with the empirical conditional entropy of a string and return the
order of the empirical conditional entropy for which the two
quantities are roughly equal. Thus, the difference between our
estimators and the constructions of
\cite{MerhavGutmanZiv89,RyabkoAstola06,RyabkoAstolaMalyutov16} is
quite fine.  Let $h_k(x_1^n)$ be the empirical conditional entropy of
order $k$, to be formally defined in Section \ref{secOrder}, and let
$\LZ(x_1^n)$ be the length of the Lempel-Ziv code
\cite{ZivLempel77}. In Equation (14) of \cite{MerhavGutmanZiv89}
adjusted to our notation the Markov order estimator is defined as
\begin{align}
  \label{LZOrderMGZ}
  \mathbf{M}_\lambda(x_1^n):=\min\klam{k\ge 0: h_k(x_1^n)\le
  \frac{1}{n}\LZ(x_1^n)+\lambda}
\end{align}
with a parameter $\lambda>0$ fixed. Moreover, in
\cite{MerhavGutmanZiv89}, we can find some bounds for the error
probability $P(\mathbf{M}_\lambda(X_1^n)\neq k)$ under the hypothesis
that $M^P=k<\infty$. In contrast, in
\cite{RyabkoAstola06,RyabkoAstolaMalyutov16}, the authors investigated
testing the null hypothesis that the probability measure has a Markov
order $M^P\le M<\infty$ versus the alternative $M^P> M$.  For that
goal, they proposed the critical region defined by inequality
\begin{align}
  \label{LZOrderRAM}
  (n-M)h_M(x_1^n)\le
  \LZ(x_1^n)+\log(1/\alpha),
\end{align}
where $\alpha\in(0,1)$. It turned out that the type I error
probability is upper bounded by $\alpha$, whereas the type II error
probability tends to $0$ for $n$ going to infinity.

In contrast, in this paper we are interested in the almost surely
consistent estimation rather than in hypothesis testing. A respective
example of our Markov order estimators is
\begin{align}
  \label{LZOrder}
  \mathbf{M}(x_1^n):=\min\klam{k\ge 0: (n-k)h_k(x_1^n)\le
  \LZ(x_1^n)+\log\frac{\pi^2}{6}+2\log (n+1)}.
\end{align}
In contrast to the original ideas of
\cite{MerhavGutmanZiv89,RyabkoAstola06,RyabkoAstolaMalyutov16}, this
cosmetic change allows to demonstrate almost sure consistency of the
Markov order estimator (\ref{LZOrder}) without much deliberation.  In
particular, we can substitute the length $\LZ(x_1^n)$ with the length
of any universal code to obtain a whole class of strongly consistent
estimators. Whereas the idea of using an arbitrary universal code to
define critical region (\ref{LZOrderRAM}) was already postulated and
proved true in \cite{RyabkoAstola06,RyabkoAstolaMalyutov16}, we push
it somewhat further to obtain a more elegant theory.  Since the
essential ideas of Markov order estimation based on universal codes
have been proposed by
\cite{MerhavGutmanZiv89,RyabkoAstola06,RyabkoAstolaMalyutov16}, our
merit lies mostly in the aesthetics of construction and proving a few
more new properties besides the consistency.  We deem that we give a
final touch to propositions that may have circulated in the folklore.

We call the particular estimators introduced here universal Markov
orders of a string.  In the following, we will show that universal
Markov orders enjoy several nice properties. To be concrete, in the
subsequent three sections, using elementary methods, we will
demonstrate the following results:
\begin{itemize}
\item In Section~\ref{secOrder}, we will first define the necessary
  concepts to introduce a formal definition of universal Markov
  orders, generalizing the definition of estimator
  (\ref{LZOrder}). Secondly, we will prove that universal Markov
  orders are almost surely consistent---for any universal code and for
  any stationary ergodic measure.  Since we will demonstrate the
  general consistency of universal Markov orders using the Barron
  lemma \cite[Theorem 3.1]{Barron85b}, probably the length of the
  universal code in (\ref{LZOrder}) cannot be substituted with the
  consistent estimators of entropy rate introduced in
  \cite{KontoyiannisOthers98,GaoKontoyiannisBienenstock08}.
\item Section~\ref{secBounds} is devoted to demonstrating that
  universal Markov orders are upper bounded asymptotically by
  $h^{-1}\log n$ almost surely, $h$ being the entropy rate. This
  result will be contrasted with a naive upper bound given by the
  maximal repetition length, which is asymptotically greater than
  $h^{-1}\log n$ almost surely and whose behavior is better captured
  by the R\'enyi entropy rates
  \cite{Shields92b,Szpankowski93a,Debowski18b}.
\item Finally, let the vocabulary size of a given order for a
  string---also called the subword complexity \cite{DeLuca99}---be the
  number of distinct substrings of the length equal to the order.
  According to Section~\ref{secPPM}, if we choose the PPM as the
  universal code then the vocabulary size of the universal Markov
  order constitutes an upper bound for the block mutual
  information. This result strengthens some similar propositions of
  \cite{Debowski18} which find their applications in statistical
  language modeling \cite{Debowski11b,Debowski12,Debowski17}.  The
  analogical results of \cite{Debowski18} involve the vocabulary size
  of a larger order, equal to the Krichevsky-Trofimov Markov order
  estimator, proved to be inconsistent for the uniform measure by
  Csiszar and Shields \cite{CsiszarShields00}. In contrast, the
  universal Markov order applied in Section~\ref{secPPM} is a
  consistent estimator as shown in Section~\ref{secOrder} and is
  usually negligible compared to the vocabulary size because of the
  bound developed in Section~\ref{secBounds}. This remains in contrast
  with the unknown worst-case behavior of the Krichevsky-Trofimov
  estimator.
\end{itemize}

In this way, universal Markov orders can be successfully applied not
only to consistent estimation of the Markov order for arbitrary
stationary ergodic measures but also to other problems connected with
quantification of long memory, compare \cite{Talata13}.  We hope that
these estimators may inspire further constructions. Especially
interesting seems the extension to countably infinite alphabets, for
which there are known consistent Markov order estimators within the
class of Markov measures \cite{MorvaiWeiss05} but there are more
general problems with the existence of universal codes
\cite{Kieffer78,OrlitskySanthanamZhang04,SilvaPiantanida20}.

\section{Universal Markov orders}
\label{secOrder}

In this section, we will define universal Markov orders and we will
establish their consistency. First, to fix the notation, let us recall
the concepts of the empirical vocabulary, the empirical entropies, the
true entropies, and universal coding. Let
$\mathbb{X}^*=\bigcup_{n\ge 0}\mathbb{X}^n$ denote the set of strings
of an arbitrary length including the empty string
$\lambda\in \mathbb{X}^0$. We define the frequency of a substring
$w_1^k\in\mathbb{X}^k$ in a string $x_1^n\in\mathbb{X}^n$ where
$0\le k\le n$ as
\begin{align}
    N(w_1^k|x_1^n):=\sum_{i=1}^{n-k+1}\boole{x_i^{i+k-1}=w_1^k}.
\end{align}
We use this notation also for the empty string $w_1^0=\lambda$, where
$N(\lambda|x_1^n)=n+1$, according to the above definition. Subsequently, we define the empirical
vocabulary of a string $x_1^n$ of order $k$ as
\begin{align}
  V_k(x_1^n):=\klam{w_1^k\in\mathbb{X}^k: N(w_1^k|x_1^n)>0}.
\end{align}
We use this notation also for $k=0$, where
$V_0(x_1^n)=\klam{\lambda}$, according to the above definition.  Let
$\card A$ denote the cardinality of set $A$. The vocabulary size
$\card V_k(x_1^m)$ is also called the subword complexity
\cite{DeLuca99}. Using the empirical vocabulary, the empirical
(conditional) entropy of a string $x_1^n$ of order $k\ge 0$ can be
equivalently defined as
\begin{align}
  h_k(x_1^n)
  &:=
  \sum_{w_1^{k+1}\in V_{k+1}(x_1^{n})}
  \frac{N(w_1^{k+1}|x_1^n)}{n-k}
    \log \frac{N(w_1^k|x_1^{n-1})}{N(w_1^{k+1}|x_1^n)}
    \nonumber\\
  &=
    \frac{1}{n-k}
    \sum_{i=k+1}^n \log \frac{N(x_{i-k}^{i-1}|x_1^{n-1})}{N(x_{i-k}^{i}|x_1^n)}
  ,
\end{align}
where $\log x$ stands for the binary logarithm of $x$.

We notice monotonicity of the empirical entropy.
\begin{theorem}
  \label{theoHGrowing}
  We have $0\le h_k(x_2^n)-h_{k+1}(x_1^n)\le \log D$.
\end{theorem}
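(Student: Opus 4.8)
The plan is to recognize that, after the index shift built into the statement, $h_k(x_2^n)$ and $h_{k+1}(x_1^n)$ are two conditional entropies of one and the same empirical distribution, so that the whole claim collapses to two textbook information inequalities. First I would rewrite $h_k(x_2^n)$ from the telescoped (second) form of the definition, reindexing the length-$(n-1)$ string $x_2^n$. A short computation gives
\begin{align}
  h_k(x_2^n)=\frac{1}{n-k-1}\sum_{i=k+2}^{n}\log\frac{N(x_{i-k}^{i-1}|x_2^{n-1})}{N(x_{i-k}^{i}|x_2^n)},
\end{align}
which runs over the \emph{same} window range $i=k+2,\dots,n$ and carries the \emph{same} prefactor $1/(n-k-1)$ as
\begin{align}
  h_{k+1}(x_1^n)=\frac{1}{n-k-1}\sum_{i=k+2}^{n}\log\frac{N(x_{i-k-1}^{i-1}|x_1^{n-1})}{N(x_{i-k-1}^{i}|x_1^n)}.
\end{align}
Evaluating the first quantity at $x_2^n$ rather than $x_1^n$ is exactly what makes the $(k+1)$-grams $x_{i-k}^{i}$ appearing there the length-$(k+1)$ \emph{suffixes} of the $(k+2)$-grams $x_{i-k-1}^{i}$ appearing in $h_{k+1}(x_1^n)$.

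Next I would introduce the empirical measure $\widehat{P}$ on the $(k+2)$-grams $a_0^{k+1}=(a_0,a_1,\dots,a_{k+1})$ of $x_1^n$, namely $\widehat{P}(a_0^{k+1}):=N(a_0^{k+1}|x_1^n)/(n-k-1)$; there are precisely $n-k-1$ such grams, matching the normalization above. Writing $A_0,\dots,A_{k+1}$ for the coordinate projections, I would read off from the definition that $h_{k+1}(x_1^n)=\widehat{H}(A_{k+1}\mid A_0^{k})$, since $N(x_{i-k-1}^{i-1}|x_1^{n-1})$ is nothing but the $\widehat{P}$-marginal count of the prefix $A_0^{k}$. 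The one genuinely fiddly point is the twin identity $h_k(x_2^n)=\widehat{H}(A_{k+1}\mid A_1^{k})$: here I must verify that the $(k+1)$-gram counts in $x_2^n$ agree with the $\widehat{P}$-marginal counts of the suffix $A_1^{k+1}$, i.e. that $N(a_1^{k+1}|x_2^n)=\sum_{a_0}N(a_0 a_1^{k+1}|x_1^n)$. This is precisely the statement that the two families of windows coincide, and it is where all the boundary bookkeeping resides; I expect this identification step, rather than any inequality, to be the main (if modest) obstacle.

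Once both sides are expressed through $\widehat{P}$, the theorem reads
\begin{align}
  0\le\widehat{H}(A_{k+1}\mid A_1^{k})-\widehat{H}(A_{k+1}\mid A_0^{k})\le\log D,
\end{align}
and both bounds are then immediate. The lower bound is just ``conditioning cannot increase entropy'': adjoining the extra symbol $A_0$ to the conditioning block can only decrease the conditional entropy. For the upper bound I would note that the difference equals the empirical conditional mutual information $\widehat{I}(A_{k+1};A_0\mid A_1^{k})$, which is bounded by $\widehat{H}(A_0\mid A_1^{k})\le\widehat{H}(A_0)\le\log D$, the final step because $A_0$ ranges over an alphabet of size $D=\card\mathbb{X}$. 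Finally I would check that the degenerate cases cause no trouble---for $k=0$ the left term is simply the unconditional empirical entropy $\widehat{H}(A_{1})$, and the statement tacitly presumes $n\ge k+2$ so that all the entropies above are defined.
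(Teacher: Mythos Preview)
Your proposal is correct and is essentially an explicit unpacking of the paper's one-line proof: the paper simply states that $h_k(x_2^n)-h_{k+1}(x_1^n)$ is the conditional mutual information between two $D$-valued random variables, which is precisely your $\widehat{I}(A_{k+1};A_0\mid A_1^{k})$ for the empirical $(k+2)$-gram distribution. The careful marginal-count verifications you flag as the ``fiddly point'' are exactly what the paper suppresses.
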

\begin{proof}
  Quantity $h_k(x_2^n)-h_{k+1}(x_1^n)$ is the conditional mutual
  information between two random variables, where each assumes $D$
  distinct values.
\end{proof}
\begin{theorem}
  \label{theoHGrowingII}
  We have
  $0\le h_k(x_1^n)-\frac{n-1-k}{n-k}h_k(x_2^n)\le \log
  \min\klam{2,D}$.
\end{theorem}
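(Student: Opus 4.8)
The plan is to collapse the whole statement to a single scalar inequality by representing the unnormalized empirical entropy through the function $f(t):=t\log t$, with $f(0):=0$, evaluated at gram counts. First I would establish the identity
\[
  (n-k)h_k(x_1^n)=\sum_{u\in\mathbb{X}^k}f\okra{N(u|x_1^{n-1})}-\sum_{w\in\mathbb{X}^{k+1}}f\okra{N(w|x_1^n)},
\]
obtained by grouping the defining sum of $h_k(x_1^n)$ according to the $k$-gram prefix $u$ of each $(k+1)$-gram $w=ua$ and using the aggregation $\sum_{a\in\mathbb{X}}N(ua|x_1^n)=N(u|x_1^{n-1})$; this turns each block into $f\okra{N(u|x_1^{n-1})}-\sum_a f\okra{N(ua|x_1^n)}$. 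Because $f(0)=0$, the sums may be extended to all of $\mathbb{X}^k$ and $\mathbb{X}^{k+1}$.

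Next I would apply the same identity to the shifted string $x_2^n$, which has length $n-1$, expressing $(n-1-k)h_k(x_2^n)$ as the analogous difference of $f$-sums over the counts $N(\cdot|x_2^{n-1})$ and $N(\cdot|x_2^n)$. Subtracting the two representations, I would invoke the elementary count relation $N(v|x_1^m)=N(v|x_2^m)+\boole{v=x_1^{\abs{v}}}$, valid because the only occurrence of $v$ in $x_1^m$ absent from $x_2^m$ is the one starting at position $1$. Hence the $k$-gram $f$-sums differ only in the term $u=x_1^k$ and the $(k+1)$-gram $f$-sums only in the term $w=x_1^{k+1}$, and the whole expression telescopes to
\[
  (n-k)h_k(x_1^n)-(n-1-k)h_k(x_2^n)=g(p)-g(q),
\]
where $g(t):=f(t+1)-f(t)=(t+1)\log(t+1)-t\log t$, $p:=N(x_1^k|x_2^{n-1})$ and $q:=N(x_1^{k+1}|x_2^n)$. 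Dividing by $n-k$ reduces the theorem to the two-sided bound $0\le g(p)-g(q)\le n-k$.

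For the lower bound I would note that $q\le p$, since $q=N(x_1^{k+1}|x_2^n)\le\sum_{a\in\mathbb{X}}N(x_1^k a|x_2^n)=N(x_1^k|x_2^{n-1})=p$, and that $g$ is increasing because $g'(t)=\log\okra{1+1/t}>0$; thus $g(p)-g(q)\ge0$. For the upper bound I would combine $g(q)\ge g(0)=0$ with the scalar estimate $g(t)\le t+1$ for all $t\ge0$ and the count bound $p\le n-k-1$ (the number of $k$-grams in $x_2^{n-1}$), obtaining $g(p)-g(q)\le g(p)\le p+1\le n-k$. After division this is at most $\log 2=\log\min\klam{2,D}$, recalling $D\ge2$. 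The degenerate case $n=k+1$, in which $h_k(x_2^n)$ has the indeterminate form $0/0$, is settled by reading $\tfrac{n-1-k}{n-k}h_k(x_2^n)$ through its sum representation, where it is an empty sum equal to $0$.

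The one genuinely delicate point is the scalar inequality $g(t)\le t+1$, which pins the constant $\log 2$: setting $\varphi(t):=t+1-g(t)$, one has $\varphi'(t)=1-\log\okra{1+1/t}$, so $\varphi$ decreases on $(0,1)$ and increases on $(1,\infty)$, attaining its minimum $\varphi(1)=0$. The bound is therefore tight exactly at $t=1$, that is at $p=1,q=0$---when the initial $(k+1)$-gram is fresh while its $k$-prefix has occurred precisely once. Everything else is routine bookkeeping of substring counts.
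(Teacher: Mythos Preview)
Your argument is correct, but it proceeds quite differently from the paper's own proof.

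The paper observes that $h_k(x_1^n)-\frac{n-1-k}{n-k}h_k(x_2^n)-\frac{1}{n-k}h_k(x_1^{k+1})$ is the conditional mutual information between two empirical random variables---a binary ``which segment'' indicator and the $D$-ary next symbol---given the $k$-context; since $h_k(x_1^{k+1})=0$, the bounds $0$ and $\log\min\{2,D\}$ drop out of the generic inequality $0\le I(Y;Z\mid U)\le\log\min\{\card\mathcal{Y},\card\mathcal{Z}\}$. That proof is essentially one line and, more importantly, the same mechanism handles the three-way split in Theorem~\ref{theoHSuperadditive} with no extra work.

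Your route is an explicit telescoping computation: writing $(n-k)h_k$ as a difference of $f(t)=t\log t$ over gram counts, cancelling all but the two terms affected by the count relation $N(v|x_1^m)=N(v|x_2^m)+\boole{v=x_1^{|v|}}$, and reducing everything to the scalar inequality $g(t)=(t+1)\log(t+1)-t\log t\le t+1$. This is longer but entirely self-contained---no information-theoretic vocabulary is invoked---and it yields a bonus the paper's argument does not: the precise extremal configuration $p=1$, $q=0$ (initial $(k{+}1)$-gram fresh, its $k$-prefix seen exactly once) at which the upper bound $\log 2$ is attained. The trade-off is that extending your method to the three-block superadditivity of Theorem~\ref{theoHSuperadditive} would require repeating the telescoping twice and tracking more boundary terms, whereas the mutual-information viewpoint scales for free.
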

\begin{proof}
  Quantity
  $h_k(x_1^n)-\frac{n-1-k}{n-k}h_k(x_2^n)-\frac{1}{n-k}h_k(x_1^{k+1})$
  is the conditional mutual information between two random variables,
  one of which assumes two distinct values and another assumes $D$
  distinct values---compare with \cite[Theorem A6]{Debowski18}, which
  only showed the left inequality. To obtain the claim we notice that
  $h_k(x_1^{k+1})=0$.
\end{proof}
Thus, by Theorems \ref{theoHGrowing} and \ref{theoHGrowingII},
quantity $(n-k)h_k(x_1^n)$ decreases with $k$ since
\begin{align}
  (n-k)h_k(x_1^n)\ge (n-k-1)h_k(x_2^n)\ge (n-k-1)h_{k+1}(x_1^n).
\end{align}

Extending the observation made in Theorem \ref{theoHGrowingII}, we can
also demonstrate superadditivity of the empirical entropy, which will be
used in Section \ref{secPPM}.
\begin{theorem}
  \label{theoHSuperadditive}
  For $0\le k<n,m-n<m$, we have inequality
  \begin{align}
    \label{HSuperadditive}
    0\le
    h_k(x_1^m)
    -\frac{n-k}{m-k}h_k(x_1^n)
    -\frac{k}{m-k}h_k(x_{n-k}^{n+k})
    -\frac{m-n-k}{m-k}h_k(x_{n+1}^m)
    \le C
  \end{align}  
where $C=\log \min\klam{3,D}$.
\end{theorem}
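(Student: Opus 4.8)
The plan is to mirror the mutual-information reading already used for Theorems \ref{theoHGrowing} and \ref{theoHGrowingII}, now with a three-valued bookkeeping variable. First I would record that $(N-k)h_k(x_1^N)$ is exactly $(N-k)$ times an empirical conditional entropy: if a start position $S$ is drawn uniformly from $\klam{1,\dots,N-k}$ and we set $W:=x_S^{S+k-1}$ (the length-$k$ context) and $Z:=x_{S+k}$ (the following symbol), then
\begin{align}
  h_k(x_1^N)=H(Z\mid W),
\end{align}
the entropy being taken under this empirical law of the $(k+1)$-grams. This is just a rewriting of the definition of $h_k$, using the count identity $N(w_1^k\mid x_1^{N-1})=\sum_{a}N(w_1^k a\mid x_1^N)$, which identifies the context count with the marginal of the $(k+1)$-gram count.

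Next I would apply this to the whole string $x_1^m$, drawing $S$ uniformly from $\klam{1,\dots,m-k}$, and introduce a region label $U\in\klam{1,2,3}$ recording whether the sampled $(k+1)$-gram lies entirely inside the prefix $x_1^n$ (i.e.\ $S\le n-k$), straddles the cut at $n$ (i.e.\ $n-k<S\le n$), or lies entirely inside the suffix $x_{n+1}^m$ (i.e.\ $S\ge n+1$). The three events carry probabilities $\tfrac{n-k}{m-k}$, $\tfrac{k}{m-k}$, $\tfrac{m-n-k}{m-k}$, and conditionally on each of them the law of $(W,Z)$ is the empirical $(k+1)$-gram law of the corresponding window, namely $x_1^n$, the boundary window $x_{n-k}^{n+k}$, and $x_{n+1}^m$. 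Hence
\begin{align}
  H(Z\mid W,U)
  =\frac{n-k}{m-k}\,h_k(x_1^n)
  +\frac{k}{m-k}\,h_k(x_{n-k}^{n+k})
  +\frac{m-n-k}{m-k}\,h_k(x_{n+1}^m).
\end{align}

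The two inequalities then fall out of standard information identities. The lower bound is just conditioning reduces entropy, $h_k(x_1^m)=H(Z\mid W)\ge H(Z\mid W,U)$, which is the asserted superadditivity. For the upper bound the gap is a conditional mutual information,
\begin{align}
  h_k(x_1^m)-H(Z\mid W,U)=I(Z;U\mid W),
\end{align}
and since $U$ takes at most three values while $Z$ takes $D$ values, $I(Z;U\mid W)\le\min\klam{H(U),H(Z\mid W)}\le\log\min\klam{3,D}=C$. This is precisely the three-valued analogue of the binary bound $\log\min\klam{2,D}$ appearing in Theorem \ref{theoHGrowingII}.

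The part that needs care—and the only genuinely fiddly step—is the boundary bookkeeping in the second paragraph: one must check that the grams straddling the cut reproduce the empirical conditional entropy of the stated window $x_{n-k}^{n+k}$, getting its endpoints and the induced context counts right (a one-symbol adjustment at the left end of the window enters here), and that the hypotheses on $k$, $n$, $m$ keep all three blocks non-degenerate so that the conditional entropies and their weights are well defined. When $k=0$ the middle block is empty and its weight $\tfrac{k}{m-k}$ vanishes, so the statement collapses to the two-block case; otherwise everything runs on the same conditioning/mutual-information mechanism as in the two preceding theorems.
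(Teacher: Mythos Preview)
Your approach is exactly the one the paper uses: the paper's entire proof is the single observation that the sandwiched quantity is the conditional mutual information $I(Z;U\mid W)$ between a three-valued region label $U$ and the $D$-valued symbol $Z$, with a pointer to \cite[Theorem A6]{Debowski18} for the bookkeeping. You have simply unpacked that reference, and you are right to flag the boundary accounting for the straddling block as the only place where any care is needed.
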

\begin{proof}
  The sandwich-bounded quantity is the conditional mutual information
  between two random variables, one of which assumes three distinct
  values and another assumes $D$ distinct values, see \cite[Theorem
  A6]{Debowski18}.
\end{proof}

The empirical entropies will be now contrasted with the true
conditional entropies.  For a stationary probability measure $P$ on
infinite sequences over alphabet $\mathbb{X}$, let us introduce the
conditional entropies $h_k^P$ and the entropy rate $h^P$ defined as
\begin{align}
  h_k^P&:=\sred\kwad{-\log P(X_i|X_{i-k}^{i-1})}
  ,
  \\
  h^P&:=\inf_{k\in\mathbb{N}} h_k^P
  =\lim_{n\to\infty}\frac{1}{n}\sred\kwad{-\log P(X_1^n)}
  .
\end{align}
In the following assume that $P$ is additionally ergodic. Then
by the Birkhoff ergodic theorem \cite{Garsia65}, we have
\begin{align}
  \label{BirkhoffHk}
  \lim_{n\to\infty} h_k(X_1^n)=\lim_{n\to\infty}
  \frac{1}{n}\kwad{-\log \prod_{i=k+1}^n P(X_i|X_{i-k}^{i-1})}=
  h_k^P
  \text{ almost surely},
\end{align}
whereas the Shannon-McMillan-Breiman theorem \cite{AlgoetCover88}
yields
\begin{align}
  \lim_{n\to\infty}\frac{1}{n}\kwad{-\log P(X_1^n)}=h^P
  \text{ almost surely}.
\end{align}

Subsequently, let us approach the problem of universal coding from a
more abstract perspective inspired by the concept of algorithmic
probability in algorithmic information theory \cite{LiVitanyi08}.  A
semi-distribution is a real function $\mathbf{\Pi}$ of strings
$w\in\mathbb{X}^*=\bigcup_{n\ge 0}\mathbb{X}^n$ of an arbitrary length
(including the empty string) such that $\mathbf{\Pi}(w)\ge 0$ and the
Kraft inequality $\sum_{w\in\mathbb{X}^*}\mathbf{\Pi}(w)\le 1$ is
satisfied. Let us write the respective pointwise entropy as
$\mathbf{H}(x_1^n):=-\log \mathbf{\Pi}(x_1^n)$.  The semi-distribution
$\mathbf{\Pi}$ is called universal if for any stationary ergodic
probability measure $P$ on infinite sequences over a finite alphabet
$\mathbb{X}$ we have
\begin{align}
  \label{Universal}
  \lim_{n\to\infty}\frac{\mathbf{H}(X_1^n)}{n}=h^P
  \text{ almost surely}.
\end{align}
Examples of universal semi-distributions are well known. In
particular, the pointwise entropy $\mathbf{H}(x_1^n)$ can be chosen as
the prefix-free Kolmogorov complexity of string $x_1^n$
\cite{Chaitin75,LiVitanyi08,Brudno82}, which is uncomputable. In this
case, $\mathbf{\Pi}(x_1^n)$ equals approximately the algorithmic
probability of $x_1^n$---by the coding theorem
\cite{Chaitin75,LiVitanyi08}. A more feasible choice of
$\mathbf{H}(x_1^n)$ is the length of any computable universal code,
such as the Lempel-Ziv code \cite{ZivLempel77}, the PPM code
\cite{Ryabko84en2,ClearyWitten84}, or one of many grammar-based codes
\cite{KiefferYang00,CharikarOthers05,Debowski11b}---where for the
universal codes we need a length correction such as
$\log\frac{\pi^2}{6}+2\log (n+1)$ in equation (\ref{LZOrder}). We
stress that the prefix-free Kolmogorov complexity does not require
this correction since its Kraft sum equals the halting probability
$\mathbf{\Omega}$, strictly less than one.

Now we have everything to define the main concept of this paper, i.e.,
the universal Markov orders---slightly modified with respect to the
constructions in \cite{MerhavGutmanZiv89,RyabkoAstola06,RyabkoAstolaMalyutov16}.
\begin{definition}
  \label{defiMOrder}
  Let $\mathbf{\Pi}$ be a universal semi-distribution and let
  $\mathbf{H}(x_1^n):=-\log \mathbf{\Pi}(x_1^n)$ be the respective
  pointwise entropy.  The respective universal Markov order of a
  string $x_1^n$ is defined as
  \begin{align}
    \label{MOrder}
    \mathbf{M}(x_1^n):=\min\klam{k\ge 0: (n-k)h_k(x_1^n)\le \mathbf{H}(x_1^n)}.
\end{align}
\end{definition}

Subsequently, we observe that the universal Markov order of a string
is a consistent estimator of the Markov order of a stationary ergodic
probability measure. This proposition complements and strengthens the
results of
\cite{MerhavGutmanZiv89,RyabkoAstola06,RyabkoAstolaMalyutov16}
discussed in Section \ref{secIntro}.
\begin{theorem}
  \label{theoConsistency}
  For a stationary ergodic probability measure $P$ on infinite
  sequences over a finite alphabet and a universal Markov
  order $\mathbf{M}$, we have
  \begin{align}
    \label{Consistency}
    \lim_{n\to\infty} \mathbf{M}(X_1^n)&=M^P \text{ almost surely}.
  \end{align}
\end{theorem}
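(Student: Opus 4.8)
The plan is to prove $\lim_{n\to\infty} \mathbf{M}(X_1^n)=M^P$ almost surely by splitting into two cases according to whether $M^P$ is finite or infinite, and in each case controlling the defining inequality $(n-k)h_k(x_1^n)\le \mathbf{H}(x_1^n)$ using the three asymptotic facts already available: the Birkhoff convergence (\ref{BirkhoffHk}), the universality property (\ref{Universal}), and the ordering $h_0^P\ge h_1^P\ge\dots\ge h^P$ with $h_k^P=h^P$ exactly when $k\ge M^P$.

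First I would treat the case $M^P=M<\infty$. To show $\liminf \mathbf{M}(X_1^n)\ge M$ (i.e., the estimator is eventually not too small), fix any $k<M$. Then $h_k^P>h^P$ strictly, so by (\ref{BirkhoffHk}) the left side behaves like $(n-k)h_k(X_1^n)\sim n h_k^P$, while by (\ref{Universal}) the right side behaves like $\mathbf{H}(X_1^n)\sim n h^P$; since $h_k^P>h^P$, for large $n$ we have $(n-k)h_k(X_1^n)>\mathbf{H}(X_1^n)$ almost surely, so $k$ fails the defining condition and $\mathbf{M}(X_1^n)>k$ eventually. Taking the finite maximum over all $k<M$ gives $\mathbf{M}(X_1^n)\ge M$ eventually. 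For the reverse bound $\limsup \mathbf{M}(X_1^n)\le M$, I would test $k=M$ itself: here $h_M^P=h^P$, so $(n-M)h_M(X_1^n)\sim n h^P$ as well, and the comparison becomes a tie in leading order. This is where the main obstacle lies — the rate correction hidden in $\mathbf{H}$. The standard tool, the Barron lemma \cite[Theorem 3.1]{Barron85b} cited in the introduction, gives almost surely $\mathbf{H}(X_1^n)\ge -\log P(X_1^n)-c_n$ for a summable-tail sequence, hence $\mathbf{H}(X_1^n)\ge -\log\prod_{i=M+1}^n P(X_i|X_{i-M}^{i-1})$ eventually up to lower-order terms; the right side is exactly $(n-M)h_M(X_1^n)$ in the sense of (\ref{BirkhoffHk}), so the inequality $(n-M)h_M(X_1^n)\le \mathbf{H}(X_1^n)$ holds for $n$ large, forcing $\mathbf{M}(X_1^n)\le M$ eventually. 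Combining the two bounds yields convergence to $M$.

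Second I would treat the case $M^P=\infty$, where I must show $\mathbf{M}(X_1^n)\to\infty$, equivalently that for every fixed $k$ the estimator eventually exceeds $k$. Here $M^P=\infty$ means $h_k^P>h^P$ for every finite $k$, so the same $\liminf$ argument as above applies verbatim: for any fixed $k$, $(n-k)h_k(X_1^n)\sim n h_k^P > n h^P\sim \mathbf{H}(X_1^n)$, so $k$ fails the defining inequality for large $n$ and $\mathbf{M}(X_1^n)>k$ eventually. Since $k$ is arbitrary, $\mathbf{M}(X_1^n)\to\infty=M^P$.

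The delicate point throughout is that the leading-order heuristics $(n-k)h_k(X_1^n)\approx n h_k^P$ and $\mathbf{H}(X_1^n)\approx n h^P$ are clean when $h_k^P\ne h^P$ (a strict gap of order $n$ dominates everything), but at the critical index $k=M$ the gap vanishes and one genuinely needs a second-order comparison. I expect the Barron lemma to be the right instrument precisely because it compares the universal codelength to the true $-\log P(X_1^n)$ with an additive slack that is negligible relative to $n$, and because (\ref{BirkhoffHk}) identifies $(n-M)h_M(X_1^n)$ with the $M$-th order plug-in log-likelihood. I would take care to invoke Birkhoff and Shannon-McMillan-Breiman only on the single common almost-sure event where all the countably many limits in (\ref{BirkhoffHk}) over $k$ and the limit (\ref{Universal}) hold simultaneously, so that the finite maximum over $k<M$ and the per-$k$ arguments are valid on one full-measure set.
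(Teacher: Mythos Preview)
Your overall strategy matches the paper's proof: split into finite and infinite $M^P$, use Birkhoff plus universality for the strict-gap indices $k<M^P$, and use the Barron lemma at $k=M$ for the upper bound. However, there is a genuine gap in your upper-bound step. You write that $-\log\prod_{i=M+1}^n P(X_i\mid X_{i-M}^{i-1})$ ``is exactly $(n-M)h_M(X_1^n)$ in the sense of (\ref{BirkhoffHk})''. That is not correct: (\ref{BirkhoffHk}) only asserts that both quantities, divided by $n$, converge to $h_M^P$; their difference can be of arbitrary sublinear order, which is precisely the regime where your ``tie in leading order'' fails to resolve anything. What is actually needed---and what the paper invokes---is the deterministic pointwise inequality
\[
(n-M)h_M(x_1^n)\;\le\; -\log\prod_{i=M+1}^n P(x_i\mid x_{i-M}^{i-1}),
\]
valid for every string $x_1^n$, which follows from the non-negativity of the Kullback--Leibler divergence between the empirical $(M{+}1)$-gram distribution and the true $M$-th order conditionals (equivalently: $(n-M)h_M$ is the maximum-likelihood value among all $M$-th order Markov models, hence dominates the log-likelihood under $P$). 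With this inequality in hand, Barron's lemma $\mathbf{H}(X_1^n)+\log P(X_1^n)\to\infty$ and the trivial bound $-\log P(x_1^n)\ge -\log P(x_{M+1}^n\mid x_1^M)$ chain to give $(n-M)h_M(X_1^n)\le \mathbf{H}(X_1^n)$ eventually, hence $\mathbf{M}(X_1^n)\le M$. Your closing remark about the ``$M$-th order plug-in log-likelihood'' suggests you may have this fact in mind, but citing (\ref{BirkhoffHk}) for it is the wrong justification and leaves the crucial step unproved.

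A minor secondary point: from ``$k$ fails the defining inequality'' you jump to $\mathbf{M}(X_1^n)>k$, but failure at a single $k$ does not by itself exclude smaller indices $j<k$ from the defining set. You patch this in the finite case by taking the maximum over all $k<M$, which works; the paper instead uses the monotonicity of $(n-k)h_k(x_1^n)$ in $k$ (Theorems~\ref{theoHGrowing}--\ref{theoHGrowingII}), so that failure at $k$ automatically forces failure at every $j\le k$.
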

\begin{proof}
  Suppose first that $M^P=M$ is finite.  By the non-negativity of the
  Kullback-Leibler divergence between the empirical distribution and
  the $k$-th order Markov approximation of a stationary measure $P$, we have
  \begin{align}
    (n-k)h_k(x_1^n)\le -\log \prod_{i=k+1}^n P(x_i|x_{i-k}^{i-1}).
  \end{align}
  Putting $k=M$, we obtain
  \begin{align}
    (n-M)h_M(x_1^n)\le -\log P(x_{M+1}^n|x_1^M)\le -\log P(x_1^n).
  \end{align}
  On the other hand, by the Barron lemma \cite[Theorem 3.1]{Barron85b}
  for any semi-distribution $\mathbf{\Pi}$ and the respective pointwise
  entropy, we have
  \begin{align}
    \label{Barron}
    \lim_{n\to\infty} \kwad{\mathbf{H}(X_1^n)+\log P(X_1^n)}=\infty
    \text{ almost surely}.
  \end{align}
  Hence  almost surely, for sufficiently large $n$, we obtain
  \begin{align}
    (n-M)h_M(X_1^n)\le \mathbf{H}(X_1^n).
  \end{align}
  In other words, for these $n$, we have $\mathbf{M}(X_1^n)\le M$.

  Subsequently, assume an arbitrary $M^P$.  By the definition of the
  Markov order we have $h_k^P>h^P$ for $k<M^P$.  Recall that we have
  (\ref{BirkhoffHk}) and (\ref{Universal}).  Hence, almost surely, for
  each $k<M^P$ and all sufficiently large $n$, we obtain
  $(n-k)h_k(X_1^n)>\mathbf{H}(X_1^n)$. Thus $\mathbf{M}(X_1^n)>k$
  since $(n-k)h_k(X_1^n)$ is a decreasing function of $k$. Combining
  this result with the observation made in the previous paragraph
  yields the claim.
\end{proof}

Paying another tribute to the algorithmic information theory, the
Barron-lemma-like property (\ref{Barron}) for $\mathbf{H}(x_1^n)$
being the prefix-free Kolmogorov complexity is well known to
characterize equivalently, via the Schnorr theorem \cite{LiVitanyi08},
the set of Martin-L\"of random sequences. Of course, the probability
of this set equals one. Using the effective Birkhoff ergodic theorem
\cite{FranklinOthers12,BienvenuOthers12}, it can be shown in fact that
convergence (\ref{Consistency}) holds on all Martin-L\"of random
sequences for any pointwise entropy function $\mathbf{H}(x_1^n)$
greater than the prefix-free Kolmogorov complexity if the universality
condition (\ref{Universal}) also holds on all Martin-L\"of random
sequences. That is, consistency of universal Markov orders for
computable universal codes can be easily restated as a so called
effective law of probability---if these codes do not do something
crazy on certain Martin-L\"of random sequences of a total null
measure. Such singular behavior for \emph{computable} estimators is
theoretically possible (Tomasz Steifer, private communication) but the
widely used universal codes from
\cite{ZivLempel77,Ryabko84en2,ClearyWitten84,
  KiefferYang00,CharikarOthers05,Debowski11b} decently satisfy
universality condition (\ref{Universal}) on all Martin-L\"of random
sequences since their universality rests on the Birkhoff ergodic
theorem.

\section{Three upper bounds}
\label{secBounds}

In this section, we will provide three simple upper bounds for
universal Markov orders. We will begin with the simplest one.  Namely,
the better is the code, the larger is the respective universal Markov
order. We mention this obvious behavior since it should be contrasted
with results like Theorem \ref{theoIbound} in Section~\ref{secPPM},
which conversely give the looser bound for the code-based mutual
information when the code compresses better.
\begin{theorem}
  \label{theoMOrderCompare}
  Consider universal semi-distributions 
  $\mathbf{\Pi}_1$ and $\mathbf{\Pi}_2$ such that $\mathbf{\Pi}_1(x_1^n)\ge
  \mathbf{\Pi}_2(x_1^n)$ or equivalently $\mathbf{H}_1(x_1^n)\le
  \mathbf{H}_2(x_1^n)$. Then $\mathbf{M}_1(x_1^n)\ge
  \mathbf{M}_2(x_1^n)$.
\end{theorem}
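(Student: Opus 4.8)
The plan is to prove this directly from the definition of the universal Markov order in Definition~\ref{defiMOrder}, exploiting the fact that the order is defined as a minimum over a set that depends monotonically on $\mathbf{H}(x_1^n)$. The key structural observation is that both $\mathbf{M}_1$ and $\mathbf{M}_2$ are computed from the \emph{same} string $x_1^n$, hence the \emph{same} empirical entropy profile $k\mapsto(n-k)h_k(x_1^n)$; only the threshold on the right-hand side differs, being $\mathbf{H}_1(x_1^n)$ versus $\mathbf{H}_2(x_1^n)$.

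First I would fix the string $x_1^n$ and set $g(k):=(n-k)h_k(x_1^n)$, recalling from the discussion following Theorem~\ref{theoHGrowingII} that $g$ is a decreasing function of $k$. The universal Markov order is then $\mathbf{M}_i(x_1^n)=\min\klam{k\ge 0: g(k)\le \mathbf{H}_i(x_1^n)}$. Now I would simply note that since $\mathbf{H}_1(x_1^n)\le \mathbf{H}_2(x_1^n)$, the defining inequality $g(k)\le \mathbf{H}_1(x_1^n)$ is \emph{stronger} than $g(k)\le \mathbf{H}_2(x_1^n)$: any $k$ satisfying the former also satisfies the latter. Consequently the feasible set for $\mathbf{M}_1$ is a subset of the feasible set for $\mathbf{M}_2$, and taking the minimum over a smaller set yields a value that is at least as large. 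This gives $\mathbf{M}_1(x_1^n)\ge \mathbf{M}_2(x_1^n)$ immediately.

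There is essentially no hard part here; the statement is a purely order-theoretic consequence of the minimum being taken over a threshold set. The only point that deserves a line of care is ensuring the feasible sets are nonempty so that the minima are well defined as finite integers rather than defaulting to $+\infty$ on both sides. Because $g$ is decreasing and $g(n)=0\le \mathbf{H}_i(x_1^n)$ (the pointwise entropy being nonnegative), each feasible set contains $k=n$ and both orders are finite; the inclusion of feasible sets then makes the inequality $\mathbf{M}_1(x_1^n)\ge\mathbf{M}_2(x_1^n)$ meaningful and correct. The equivalence $\mathbf{\Pi}_1(x_1^n)\ge\mathbf{\Pi}_2(x_1^n)\iff\mathbf{H}_1(x_1^n)\le\mathbf{H}_2(x_1^n)$ is just the monotonicity of $-\log$, which I would invoke without comment.
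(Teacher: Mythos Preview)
Your proof is correct and follows essentially the same route as the paper's: both observe that the defining inequality for $\mathbf{M}_2$ is weaker than that for $\mathbf{M}_1$, so the minimum for $\mathbf{M}_2$ can only be smaller. The paper's version is slightly terser, picking $k=\mathbf{M}_1(x_1^n)$ and noting $(n-k)h_k(x_1^n)\le\mathbf{H}_1(x_1^n)\le\mathbf{H}_2(x_1^n)$ directly; your additional remark about nonemptiness of the feasible sets is a harmless (and accurate) extra.
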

\begin{proof}
  Let $k=\mathbf{M}_1(x_1^n)$. Then
  $(n-k)h_k(x_1^n)\le \mathbf{H}_1(x_1^n)\le
  \mathbf{H}_2(x_1^n)$. Hence we obtain $\mathbf{M}_2(x_1^n)\le k$.
\end{proof}

The second bound for universal Markov orders is the pessimistic upper
bound in terms of the maximal repetition length. Let
\begin{align}
 L(x_1^n):=\max\klam{k\ge 0: \card V_k(x_1^n)< n-k+1} 
\end{align}
be the maximal repetition length \cite{DeLuca99}. Observe that
$(n-k)h_k(x_1^n)\ge \log 2=1$ and $h_k(x_1^n)\le\log D$ for
$k\le L(x_1^n)$, whereas $h_k(x_1^n)=0$ for $k>L(x_1^n)$. Hence we
have the following statement.
\begin{theorem}
  \label{theoMOrderMaxRep}
  For a universal Markov order $\mathbf{M}$,
  we have inequality
  \begin{align}
    \label{MOrderMaxRep}
    \mathbf{M}(x_1^n)\le L(x_1^n)+1.
  \end{align}
\end{theorem}
\begin{proof}
  We have $h_k(x_1^n)=0$ for $k>L(x_1^n)$, whereas
  $\mathbf{H}(x_1^n)>0$.
\end{proof}

We will improve the above naive bound in a probabilistic setting.  Let
us recall two simple bounds for the maximal repetition length. First,
since all substrings of $x_1^n$ of length $L(x_1^{n})+1$ must be
distinct we observe inequality $n-L(x_1^{n})\le D^{L(x_1^{n})+1}$,
whence we obtain the lower bound
\begin{align}
  L(x_1^{n})\ge\log_D \kwad{n-\log_D n}-1
  .
\end{align}
Moreover for any stationary ergodic measure $P$ on infinite sequences,
we have
\begin{align}
  \label{MaxRepEntropy}
  \liminf_{n\to\infty} \frac{L(X_1^n)}{\log n}\ge \frac{1}{h^P}
  \text{ almost surely}.
\end{align}
This inequality can be strict and the better bound for the left hand
side is given by the inverse (conditional) R\'enyi entropy rate
\cite{Shields92b,Szpankowski93a,Debowski18b}. In contrast, we will see
that universal Markov orders satisfy the converse inequality.

Prior to that, we will state two auxiliary statements.  First,
analogously to Theorem \ref{theoHGrowingII}, we can prove
\begin{align}
  0\le h_k(x_1^n)-\frac{n-1-k}{n-k}h_k(x_1^{n-1})\le \log
  \min\klam{2,D},
\end{align}
whence we derive
\begin{align}
  \label{HGrowing}
 h_l(x_1^{n+l})\ge \frac{n-l}{n}h_l(x_1^n).
\end{align}
This inequality can be chained with the subsequent proposition, which
says that the infinite series of some empirical entropies is upper
bounded.
\begin{theorem}
We have inequality
\begin{align}
  \label{Hbound}
  \sum_{l=0}^{\infty} h_l(x_1^{n+l})\le \log n
  .
\end{align}
\end{theorem}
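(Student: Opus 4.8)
The plan is to turn the claimed bound into a telescoping identity for the partial sums $\sum_{l=0}^{L} h_l(x_1^{n+l})$. The structural feature that makes this work is that the normalizing factor $n-k$ appearing in $h_k$ equals $n$ for \emph{every} term of this particular series, since the string in the $l$-th term has length $n+l$ and $(n+l)-l=n$. So after multiplying each summand by the common factor $n$, I expect each term to become a clean difference of two log-count functionals at consecutive orders, which then cancels in pairs.

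Concretely, first I would introduce the functional
\[
  G_m(l):=\sum_{u\in V_l(x_1^m)} N(u|x_1^m)\log N(u|x_1^m),
\]
and rewrite $n\,h_l(x_1^{n+l})$ by splitting the logarithm in the definition of $h_l$ as $\log N(\text{prefix})-\log N(\text{full})$. The term carrying the full $(l+1)$-gram count is immediately $G_{n+l}(l+1)$. For the prefix term I would group the sum over $(l+1)$-grams $w_1^{l+1}$ by their length-$l$ prefix $u$ and apply the count-consistency identity $\sum_{a}N(ua|x_1^{n+l})=N(u|x_1^{n+l-1})$ (every occurrence of $u$ in $x_1^{n+l-1}$ is followed by some symbol in $x_1^{n+l}$), which collapses that term to $G_{n+l-1}(l)$. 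This gives the key identity
\[
  n\,h_l(x_1^{n+l})=G_{n+l-1}(l)-G_{n+l}(l+1).
\]

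The second step is pure telescoping. Setting $T_l:=G_{n+l-1}(l)$, the identity reads $n\,h_l(x_1^{n+l})=T_l-T_{l+1}$, so
\[
  \sum_{l=0}^{L} n\,h_l(x_1^{n+l})=T_0-T_{L+1}=n\log n-G_{n+L}(L+1).
\]
Here $T_0=G_{n-1}(0)=N(\lambda|x_1^{n-1})\log N(\lambda|x_1^{n-1})=n\log n$, because $V_0=\klam{\lambda}$ and $N(\lambda|x_1^{n-1})=n$. Since every string $u$ in a vocabulary satisfies $N(u|x_1^{m})\ge 1$, the tail term $G_{n+L}(L+1)$ is nonnegative, so the partial sums are bounded by $n\log n$. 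Dividing by $n$ and letting $L\to\infty$ (the partial sums are nondecreasing because $h_l\ge 0$) yields the claim.

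I expect the only genuinely delicate point to be the count-consistency identity together with its length bookkeeping: one must verify that the prefixes arising from $V_{l+1}(x_1^{n+l})$ are exactly the elements of $V_l(x_1^{n+l-1})$, and that summing the full counts over the final symbol returns the prefix count at length $n+l-1$ rather than $n+l$. Getting these index shifts right is precisely what makes the $-G_{n+l}(l+1)$ of the $l$-th term cancel the $+G_{n+l}(l+1)$ of the $(l+1)$-st term, so that the series telescopes. Once that is secured, the boundary evaluation $T_0=n\log n$ and the nonnegativity of the tail require no further estimates.
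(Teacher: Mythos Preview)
Your argument is correct and is essentially the paper's proof written out in full: the paper simply asserts the closed form
\[
\sum_{l=0}^{k} h_l(x_1^{n+l})
=\sum_{w_1^{k+1}\in V_{k+1}(x_1^{n+k})}
\frac{N(w_1^{k+1}|x_1^{n+k})}{n}\log\frac{n}{N(w_1^{k+1}|x_1^{n+k})}
\]
and bounds it by $\log n$, while your telescoping via $T_l=G_{n+l-1}(l)$ is exactly how one verifies that identity (using $\sum_{w}N(w|x_1^{n+k})=n$ to rewrite $n\log n$). The count-consistency step and the boundary evaluation $N(\lambda|x_1^{n-1})=n$ are handled correctly.
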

\begin{proof}
Notice that
\begin{align}
  \sum_{l=0}^k h_l(x_1^{n+l})
  &=
  \sum_{w_1^{k+1}\in V_{k+1}(x_1^{n+k})}
  \frac{N(w_1^{k+1}|x_1^{n+k})}{n}
    \log \frac{n}{N(w_1^{k+1}|x_1^{n+k})}
    \le \log n.
\end{align}
Taking $k\to\infty$ yields the claim.
\end{proof}

Now we can state the improved third bound for universal Markov orders,
which is converse to inequality (\ref{MaxRepEntropy}) for the maximal
repetition length.
\begin{theorem}
  \label{theoMOrderEntropy}
  For a stationary ergodic probability measure $P$ on infinite
  sequences over a finite alphabet and a universal Markov
  order $\mathbf{M}$, we have
  \begin{align}
    \label{MOrderEntropy}
    \limsup_{n\to\infty} \frac{\mathbf{M}(X_1^n)}{\log n}
    \le
    \frac{1}{h^P}
    \text{ almost surely}.
  \end{align}
\end{theorem}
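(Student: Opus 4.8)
The goal is an almost-sure upper bound $\limsup_n \mathbf{M}(X_1^n)/\log n \le 1/h^P$. The plan is to exploit the two auxiliary facts just proved: the chaining inequality (\ref{HGrowing}) and the summability bound (\ref{Hbound}). The overall strategy is to show that, for any $\epsilon>0$ and $k$ slightly above $(1+\epsilon)h^{-1}\log n$, the defining inequality $(n-k)h_k(X_1^n)\le \mathbf{H}(X_1^n)$ is satisfied almost surely for all large $n$, which forces $\mathbf{M}(X_1^n)\le k$.

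First I would fix $\epsilon>0$ and set $k=k(n):=\ceil{(1+\epsilon)h^{-1}\log n}$ (treating the degenerate case $h^P=0$ separately, where the claim is vacuous or follows trivially from the maximal-repetition bound). The idea is to bound $(n-k)h_k(x_1^n)$ from above by combining the two ingredients. From (\ref{Hbound}) applied with the shift $n\mapsto n-k$, the single term $h_k(x_1^n)$ is dominated by the tail of a convergent-style sum, so that $h_k(x_1^n)\le\log(n-k)$ trivially but, more usefully, the \emph{average} of $h_l$ over a range of $l$ near $k$ is small—on the order of $\log n$ divided by the number of terms. Concretely, I would use (\ref{HGrowing}) in the form $h_l(x_1^n)\le \frac{n}{n-l}h_l(x_1^{n+ (n - (n-l))})$, i.e. relate $h_k(x_1^n)$ to the entropies $h_l(x_1^{m+l})$ appearing in (\ref{Hbound}), so that summing over a block of consecutive orders $l$ below $k$ and invoking (\ref{Hbound}) yields that at least one such order—and hence, by the monotonicity of $(n-l)h_l$ established after Theorem~\ref{theoHGrowingII}, \emph{every} order at least as large—makes $(n-l)h_l(x_1^n)$ smaller than a quantity of order $(\log n)/\epsilon$.

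The second half is to show $\mathbf{H}(X_1^n)$ grows faster than this, so the inequality in (\ref{MOrder}) is met. By universality (\ref{Universal}) we have $\mathbf{H}(X_1^n)/n\to h^P$ almost surely, hence $\mathbf{H}(X_1^n)\ge (1-\delta)h^P n$ eventually. Comparing $(n-k)h_k(X_1^n)$, which I will have bounded by something of order $\log n$, against $\mathbf{H}(X_1^n)$, which is of order $n$, the latter dominates for all large $n$; this immediately gives $\mathbf{M}(X_1^n)\le k=\ceil{(1+\epsilon)h^{-1}\log n}$ almost surely for large $n$, and letting $\epsilon\downarrow 0$ through a countable sequence completes the argument.

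The main obstacle is the bookkeeping in the first half: turning the single-order quantity $(n-k)h_k(x_1^n)$ into something controlled by the \emph{summed} bound (\ref{Hbound}), whose terms are the shifted entropies $h_l(x_1^{m+l})$ rather than $h_l(x_1^n)$ at a fixed $n$. The delicate point is matching the string lengths: (\ref{Hbound}) fixes the suffix length $n$ and varies the total length $n+l$, whereas $\mathbf{M}$ fixes the total length. Reconciling these via (\ref{HGrowing}) and the decreasing-in-$k$ property of $(n-k)h_k$, and verifying that the number of orders summed over is large enough (proportional to $\log n$) that the average entropy falls below the per-symbol threshold $h^P$, is where the real care is needed; the rest is the routine two-sided comparison of an $O(\log n)$ term against an $\Omega(n)$ term.
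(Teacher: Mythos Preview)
Your ingredients are exactly those the paper uses---(\ref{HGrowing}), (\ref{Hbound}), the monotonicity of $(n-l)h_l$, and universality---but the order-of-magnitude bookkeeping in your comparison step is wrong, and this is not merely cosmetic. Summing $\sum_{l<k}\frac{(n-l)h_l(x_1^n)}{n}\le\log n$ via (\ref{HGrowing}) and (\ref{Hbound}) and then using monotonicity gives $(n-k)h_k(x_1^n)\le \frac{n\log n}{k}$; for $k\approx(1+\epsilon)(h^P)^{-1}\log n$ this is $\approx \frac{nh^P}{1+\epsilon}$, \emph{not} something of order $\log n$ or $(\log n)/\epsilon$. Thus the final step is not a trivial ``$O(\log n)$ versus $\Omega(n)$'' comparison: both sides are of order $n$, and it is precisely the factor $1/(1+\epsilon)$ on the left against the factor $(1-\delta)$ on the right (from $\mathbf{H}(X_1^n)\ge(1-\delta)nh^P$ eventually) that makes the inequality close. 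If the comparison were really $O(\log n)$ versus $\Omega(n)$, the specific scale of $k$ would be irrelevant and you would conclude $\mathbf{M}(X_1^n)$ is bounded, which is false in general. With the corrected magnitudes your argument does go through.

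The paper organizes the same computation more directly and avoids the target $k$, the $\epsilon$, and the monotonicity step entirely. It sums over $l<\mathbf{M}(x_1^n)$ and uses the \emph{defining} inequality $(n-l)h_l(x_1^n)>\mathbf{H}(x_1^n)$ for each such $l$: combining (\ref{HGrowing}) and (\ref{Hbound}) gives
\[
\log n\;\ge\;\sum_{l=0}^{\mathbf{M}(x_1^n)-1}\frac{(n-l)h_l(x_1^n)}{n}\;>\;\frac{\mathbf{M}(x_1^n)\,\mathbf{H}(x_1^n)}{n},
\]
hence $\mathbf{M}(x_1^n)/\log n<n/\mathbf{H}(x_1^n)$, and universality (\ref{Universal}) finishes in one stroke. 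This is the inequality your averaging argument is groping toward, obtained without the detour.
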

\begin{proof}
  Observe that $(n-l)h_l(x_1^n)> \mathbf{H}(x_1^n)$ for
  $l<\mathbf{M}(x_1^n)$.  Hence by formulas (\ref{HGrowing}) and
  (\ref{Hbound}), we obtain 
\begin{align}
  \log n
  &\ge
    \sum_{l=0}^{\infty} h_l(x_1^{n+l})
  \ge
  \sum_{l=0}^{\mathbf{M}(x_1^n)-1}
  \frac{(n-l)h_l(x_1^n)}{n}
  >
  \frac{\mathbf{M}(x_1^n)\mathbf{H}(x_1^n)}{n}
  .
\end{align}
In consequence we obtain an upper bound for the universal Markov
order,
\begin{align}
  \label{MlogN}
  \frac{\mathbf{M}(x_1^n)}{\log n}<\frac{n}{\mathbf{H}(x_1^n)}
  .
\end{align}
To complete the proof, we invoke the universality, i.e., property
(\ref{Universal}).
\end{proof}

As a corollary of Theorems \ref{theoConsistency} and
\ref{theoMOrderEntropy}, we obtain this proposition which asserts some
convergence of empirical entropies.
\begin{theorem}
  Consider a stationary ergodic probability measure $P$ on infinite
  sequences over a finite alphabet and a universal Markov
  order $\mathbf{M}$. Then for finite $M^P$, we have
  \begin{align}
    \label{RatesFinite}
  \lim_{n\to\infty} h_{\mathbf{M}(X_1^n)}(X_1^n)=
  \lim_{n\to\infty} h_{\mathbf{M}(X_1^n)}^P&=h^P \text{ almost surely},
\end{align}
whereas for $M^P=\infty$ and $h^P>0$, we have
\begin{align}
    \label{RatesInfinite}
  \lim_{n\to\infty} h_{\mathbf{M}(X_1^n)-1}(X_1^n)=
  \lim_{n\to\infty} h_{\mathbf{M}(X_1^n)-1}^P&=h^P \text{ almost surely}.
\end{align}
\end{theorem}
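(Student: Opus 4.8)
The plan is to treat the two cases separately and, within each, to handle the true conditional entropies and the empirical conditional entropies by different means. The statements about $h_{\mathbf{M}(X_1^n)}^P$ and $h_{\mathbf{M}(X_1^n)-1}^P$ are the easy part: since $h_k^P\to h^P$ as $k\to\infty$ and, by Theorem~\ref{theoConsistency}, $\mathbf{M}(X_1^n)\to M^P$ almost surely, in the finite case $\mathbf{M}(X_1^n)$ is eventually constant and equal to $M=M^P$, so that $h_{\mathbf{M}(X_1^n)}^P=h_M^P=h^P$ for large $n$, whereas in the infinite case $\mathbf{M}(X_1^n)-1\to\infty$ and hence $h_{\mathbf{M}(X_1^n)-1}^P\to\inf_k h_k^P=h^P$. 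The finite case for the empirical entropy is equally short: eventual constancy of $\mathbf{M}(X_1^n)$ reduces $h_{\mathbf{M}(X_1^n)}(X_1^n)$ to $h_M(X_1^n)$, which converges to $h_M^P=h^P$ by the Birkhoff limit (\ref{BirkhoffHk}).

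The substance of the proof is the infinite case for the empirical entropy, where I would establish matching bounds. For the lower bound, I would use that $k:=\mathbf{M}(X_1^n)-1$ falls short of the threshold in Definition~\ref{defiMOrder}, so that $(n-k)h_k(X_1^n)>\mathbf{H}(X_1^n)$ and hence $h_{\mathbf{M}(X_1^n)-1}(X_1^n)>\mathbf{H}(X_1^n)/(n-k)$. Invoking universality (\ref{Universal}), namely $\mathbf{H}(X_1^n)/n\to h^P$, together with the growth bound $\mathbf{M}(X_1^n)=O(\log n)$ from Theorem~\ref{theoMOrderEntropy} (which is exactly where $h^P>0$ enters), the factor $n/(n-k)\to 1$, so the right-hand side tends to $h^P$ and $\liminf_n h_{\mathbf{M}(X_1^n)-1}(X_1^n)\ge h^P$ almost surely.

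The upper bound is the main obstacle, because here the order $k=\mathbf{M}(X_1^n)-1$ grows with $n$, while the Birkhoff limit (\ref{BirkhoffHk}) is only available at fixed orders. I would bridge this gap by an \emph{almost-monotonicity} of the empirical entropy in the order: combining Theorem~\ref{theoHGrowing} in the form $h_{j+1}(x_1^n)\le h_j(x_2^n)$ with Theorem~\ref{theoHGrowingII} in the form $h_j(x_2^n)\le\frac{n-j}{n-1-j}h_j(x_1^n)$ yields $h_{j+1}(x_1^n)\le\frac{n-j}{n-1-j}h_j(x_1^n)$, and iterating this from a fixed $K$ up to $k$ telescopes to $h_k(x_1^n)\le\frac{n-K}{n-k}h_K(x_1^n)$ for large $n$. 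Setting $k=\mathbf{M}(X_1^n)-1$ and using $\mathbf{M}(X_1^n)=O(\log n)$ again to send $\frac{n-K}{n-k}\to 1$, while $h_K(X_1^n)\to h_K^P$ by (\ref{BirkhoffHk}), gives $\limsup_n h_{\mathbf{M}(X_1^n)-1}(X_1^n)\le h_K^P$ for every fixed $K$. Letting $K\to\infty$ and using $\inf_K h_K^P=h^P$ then produces the reverse bound, which together with the previous paragraph closes the argument.
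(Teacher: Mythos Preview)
Your proof is correct and follows essentially the same route as the paper: the paper likewise splits the finite and infinite cases, invokes consistency and (\ref{BirkhoffHk}) for the finite case, and for $M^P=\infty$ sandwiches $h_{\mathbf{M}(X_1^n)-1}(X_1^n)$ between $\mathbf{H}(X_1^n)/(n-\mathbf{M}(X_1^n)+1)\to h^P$ from below (by universality and $\mathbf{M}(X_1^n)/n\to 0$) and $\frac{n-K}{n-\mathbf{M}(X_1^n)+1}h_K(X_1^n)\to h_K^P$ from above for each fixed $K$, then sends $K\to\infty$. The only cosmetic difference is that the paper quotes the monotonicity of $(n-k)h_k(x_1^n)$ in $k$ directly (already noted right after Theorems~\ref{theoHGrowing} and~\ref{theoHGrowingII}), whereas you re-derive it explicitly by telescoping those two inequalities.
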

\begin{proof}
  The right equalities in (\ref{RatesFinite}) and
  (\ref{RatesInfinite}) follow by Theorem \ref{theoConsistency}. It
  remains to show the left equalities.
  
  Suppose first that $M^P=M$ is finite. Then for sufficiently large
  $n$, we have $\mathbf{M}(X_1^n)=M$ almost surely and by the Birkhoff
  ergodic theorem, we obtain almost surely
  \begin{align}
    h_M(X_1^n)\to h_M^P=h^P.
  \end{align}
  Thus we derive the left equality in (\ref{RatesFinite}).
  
  Subsequently assume that $M^P=\infty$ and $h^P>0$.  Observe that by
  (\ref{MOrderEntropy}) we have 
  $\lim_{n\to\infty} \mathbf{M}(X_1^n)/n=0$ almost surely. Then for
  any $k$ and sufficiently large $n$, we have
  $\mathbf{M}(X_1^n)-1\ge k$ and consequently almost surely
  \begin{align}
    h_{\mathbf{M}(X_1^n)-1}(X_1^n) \le
    \frac{n-k}{n-\mathbf{M}(X_1^n)+1}h_k(X_1^n)
    \to h_k^P,
  \end{align}
  whereas by the universality, i.e., property (\ref{Universal}), we
  have almost surely
  \begin{align}
   h_{\mathbf{M}(X_1^n)-1}(X_1^n)>
    \frac{\mathbf{H}(X_1^n)}{n-\mathbf{M}(X_1^n)+1}
    \to h^P.
  \end{align}
  Hence we infer the left equality in (\ref{RatesInfinite}) since
  $h^P=\inf_{k\in\mathbb{N}} h_k^P$.
\end{proof}

\section{PPM code and mutual information}
\label{secPPM}

The utility of consistent Markov order estimators exceeds the problem
of estimation of the Markov order since they can be fruitfully related
to quantification of long memory in stochastic processes, see
\cite{Talata13}.  Strengthening some results of \cite{Debowski18}, in
this section we will investigate a bound for the power-law growth of
block mutual information, which characterizes some non-hidden Markov
processes motivated by natural language phenomena
\cite{Debowski11b,Debowski12,Debowski17,Debowski18}.  This bound for
mutual information applies the concept of a universal Markov order for
the universal semi-distribution of the PPM code
\cite{Ryabko84en2,ClearyWitten84} and the respective vocabulary
size. Similar results were obtained in \cite{Debowski18} applying a
larger empirical order, equal to the Krichevsky-Trofimov order, proved
to be an inconsistent estimator of the Markov order for the uniform
measure by \cite{CsiszarShields00}. Our results have a neat
interpretation from the viewpoint of statistical language modeling,
which we will comment on at the end of this section.

Following \cite{Ryabko84en2,ClearyWitten84,Debowski18}, the PPM
measure of order $k\ge 0$ is defined as
  \begin{align}
    \PPM_k(x_1^n)&:=\prod_{i=1}^n\PPM_k(x_i|x_1^{i-1}),
  \end{align}
  where
   \begin{align}
   \label{PPMkCond}
    \PPM_k(x_i|x_1^{i-1})
                 &:=
    \begin{cases}
      \displaystyle\frac{1}{D} & \text{if $k>i-2$},
      \\
      \displaystyle\frac{N(x_{i-k}^i|x_1^{i-1})+1}{N(x_{i-k}^{i-1}|x_1^{i-2})+D}
      & \text{else}.
    \end{cases}
  \end{align}
  As we can see, $\PPM_k(x_1^n)$ is an estimator of the probability of
  block $x_1^n$ based on the Markov model of order $k$.  We note that
  these Markov estimators are adaptive, i.e., the transition
  probabilities $\PPM_k(x_i|x_1^{i-1})$ are re-estimated given each
  new symbol $x_i$. Moreover, we notice that term
  $\PPM_k(x_i|x_1^{i-1})$ is a conditional probability distribution
  and thus $\PPM_k(x_1^n)$ is a probability measure,
\begin{align}
  \sum_{x_i\in\mathbb{X}} \PPM_k(x_i|x_1^{i-1})=
  \sum_{x_1^n\in\mathbb{X}^n} \PPM_k(x_1^n)=1.
\end{align}

If $k>n-2$ then $\PPM_k(x_1^n)=D^{-n}$.  Else, by (\ref{PPMkCond}), we
obtain
\begin{align}
    \PPM_k(x_1^n)
    &=D^{-k}
    \prod_{w_1^k\in V_k(x_1^{n-1})}
    \frac{(D-1)!\prod_{w_{k+1}\in\mathbb{X}}
      N(w_1^{k+1}|x_1^n)!}{(N(w_1^k|x_1^{n-1})+D-1)!}
    .
\end{align}
Hence using the Stirling approximation, the PPM probability can be
related to the empirical entropy and the empirical vocabulary of the
respective string. In particular, by Theorem A4 in \cite{Debowski18},
we have
\begin{align}
  \alpha
  \le
    \frac{-\log\PPM_k(x_1^n)-k\log D
  -
  (n-k)h_k(x_1^n)}{D\card V_k(x_1^{n-1})}
  \le
  \log[e^2n]
  ,
  \label{PPMbound}
\end{align}
where $\alpha:=-\log (D^{-1})!$.

Subsequently, let us consider the PPM semi-distribution
\begin{align}
  \label{PPMPi}
  \mathbf{\Pi}(x_1^n)
  &:=
  \frac{6^2}{\pi^4} \cdot \frac{1}{(n+1)^2}
  \sum_{k=0}^\infty
  \frac{\PPM_k(x_1^n)}{(k+1)^{2}}
  .
\end{align}
The series can be computed effectively since $\PPM_k(x_1^n)=D^{-n}$ for
$k>n-2$.  Since the PPM semi-distribution (\ref{PPMPi}) is universal
as a consequence of bound (\ref{PPMbound}), see \cite{Ryabko84en2}, we
can consider the respective universal Markov order $\mathbf{M}(x_1^n)$
given by (\ref{MOrder}) and compare it with the Krichevsky-Trofimov
order defined as
\begin{align}
  \label{KOrder}
  \mathbf{K}(x_1^n):=
  \min\klam{k\ge 0: \PPM_k(x_1^n)=\max_{j\ge 0} \PPM_j(x_1^n)}.
\end{align}
Since $\PPM_k(x_1^n)=D^{-n}$ for $k>n-2$, the Krichevsky-Trofimov
order was shown by \cite{CsiszarShields00} to be an inconsistent
estimator of the Markov order for the uniform measure. Precisely, we
have $\lim_{n\to\infty}\mathbf{K}(X_1^n)=\infty$ almost surely for
$P(x_1^n)=D^{-n}$, which has the Markov order $M^P=0$.

This inconsistency result is quite intuitive also because the
Krichevsky-Trofimov order is greater than the universal Markov order,
which is consistent.
\begin{theorem}
  Let $\mathbf{\Pi}$ be the PPM semi-distribution (\ref{PPMPi}). We
  have
  \begin{align}
    \mathbf{M}(x_1^n)\le \mathbf{K}(x_1^n).
  \end{align}
\end{theorem}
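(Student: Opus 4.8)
The plan is to verify directly that $k=\mathbf{K}(x_1^n)$ already satisfies the inequality defining the universal Markov order in (\ref{MOrder}); since $\mathbf{M}(x_1^n)$ is by definition the \emph{smallest} such $k$, this immediately yields $\mathbf{M}(x_1^n)\le\mathbf{K}(x_1^n)$. Writing $K:=\mathbf{K}(x_1^n)$, it therefore suffices to establish the single inequality $(n-K)h_K(x_1^n)\le\mathbf{H}(x_1^n)$.

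First I would exploit the defining property of the Krichevsky--Trofimov order (\ref{KOrder}), namely that $\PPM_k(x_1^n)\le\PPM_K(x_1^n)$ for every $k\ge 0$. Inserting this termwise into the PPM semi-distribution (\ref{PPMPi}) and using $\sum_{k=0}^\infty (k+1)^{-2}=\pi^2/6$, I obtain
\begin{align}
  \mathbf{\Pi}(x_1^n)
  &\le \frac{6^2}{\pi^4(n+1)^2}\PPM_K(x_1^n)\sum_{k=0}^\infty\frac{1}{(k+1)^2}
  \nonumber \\
  &=\frac{6}{\pi^2(n+1)^2}\PPM_K(x_1^n)
  \le\PPM_K(x_1^n),
\end{align}
the last step because $6/\pi^2<1$ and $(n+1)^2\ge 1$. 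Taking negative logarithms gives $\mathbf{H}(x_1^n)\ge-\log\PPM_K(x_1^n)$; note that the weighting constants and the factor $(n+1)^{-2}$ are generous enough that they play no essential role and need not be tracked.

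The second step is to bound the PPM code length from below by the empirical entropy. For this I would invoke the left inequality of (\ref{PPMbound}), which rearranges to $-\log\PPM_K(x_1^n)\ge K\log D+(n-K)h_K(x_1^n)+\alpha D\,\card V_K(x_1^{n-1})$. Since $K\log D\ge 0$ and $\card V_K(x_1^{n-1})\ge 0$, everything reduces to checking that the constant $\alpha=-\log (D^{-1})!$ is nonnegative, i.e.\ that $\Gamma(1+1/D)\le 1$ for $D\ge 2$, which holds because $\Gamma(1+x)\le 1$ on $[0,1/2]$. Granting this, $-\log\PPM_K(x_1^n)\ge(n-K)h_K(x_1^n)$, and combining with the previous display closes the argument.

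I expect the only genuine wrinkle to be the degenerate regime of (\ref{PPMbound}) where its denominator $\card V_K(x_1^{n-1})$ vanishes. This is harmless: because $\PPM_k(x_1^n)=D^{-n}$ is constant for $k>n-2$, the smallest maximizer satisfies $K\le n-1$, whence $\card V_K(x_1^{n-1})\ge 1$ and (\ref{PPMbound}) applies; moreover in the extreme case $K=n-1$ one has $(n-K)h_K(x_1^n)=h_{n-1}(x_1^n)=0\le\mathbf{H}(x_1^n)$ outright. Conceptually, the lower bound $-\log\PPM_K(x_1^n)\ge(n-K)h_K(x_1^n)$ is just the statement that the Laplace-smoothed (Bayes-mixture) likelihood cannot exceed the maximum-likelihood value $2^{-(n-K)h_K(x_1^n)}$ of the empirical $K$-th order Markov model, which is the real reason the estimate works.
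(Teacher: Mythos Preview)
Your proof is correct and follows essentially the same route as the paper's: both pass from the maximizing property of $\mathbf{K}(x_1^n)$ to $\mathbf{H}(x_1^n)\ge -\log\PPM_K(x_1^n)$ via the mixture definition (\ref{PPMPi}), and then invoke the lower bound in (\ref{PPMbound}) to obtain $(n-K)h_K(x_1^n)\le -\log\PPM_K(x_1^n)$. Your write-up is in fact slightly more careful than the paper's, making explicit the positivity of $\alpha$ and the degenerate case $K=n-1$, which the paper leaves implicit.
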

\begin{proof}
  Let $k=\mathbf{K}(x_1^n)$. By (\ref{PPMbound}), we have
  \begin{align}
    (n-k)h_k(x_1^n)
    &< -\log\PPM_k(x_1^n)
      \nonumber\\
    &=
    -\log \max_{j\ge 0} \PPM_j(x_1^n)
    \le \mathbf{H}(x_1^n)-2\log (n+1)-\log \frac{\pi^2}{6}.
  \end{align}
  Hence $\mathbf{M}(x_1^n)\le k$.
\end{proof}

Let
$\mathbf{I}(x_1^n;x_{n+1}^m):= \mathbf{H}(x_1^n) +
\mathbf{H}(x_{n+1}^m) - \mathbf{H}(x_1^m)$ be the pointwise mutual
information for a semi-distribution $\mathbf{\Pi}$.  For the PPM
semi-distribution (\ref{PPMPi}), let
$V_\mathbf{M}(x_1^m):= V_{\mathbf{M}(x_1^m)}(x_1^m)$ and
$V_\mathbf{K}(x_1^m):= V_{\mathbf{K}(x_1^m)}(x_1^m)$ be the respective
vocabularies of the universal Markov order and the Krichevsky-Trofimov
order.  As shown in Theorem A7 in \cite{Debowski18}, stemming from
bound (\ref{PPMbound}), we have inequality
\begin{align}
    \mathbf{I}(x_1^n;x_{n+1}^m)
  \le
  \mathbf{K}(x_1^m)\log D+
  2\kwad{D\card V_\mathbf{K}(x_1^m)+2\log \frac{\pi^2}{6}+4}\log[e^2m]
  ,
\end{align}
which upper bounds the pointwise mutual information with the size of
the Krichevsky-Trofimov order vocabulary.  Applying the universal
Markov order, this inequality can be strengthened as follows.
\begin{theorem}
  \label{theoIbound}
  Let $\mathbf{\Pi}$ be the PPM semi-distribution (\ref{PPMPi}). Then for
  $0\le \mathbf{M}(x_1^m)<n,m-n<m$, we have inequality
  \begin{align}
    \mathbf{I}(x_1^n;x_{n+1}^m)
    \le
    2\kwad{D\card V_\mathbf{M}(x_1^m)+
    \frac{m\log D}{\mathbf{H}(x_1^m)}+2\log \frac{\pi^2}{6}+4}\log[e^2m]
    .
  \end{align}
\end{theorem}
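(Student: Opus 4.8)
The plan is to fix $k:=\mathbf{M}(x_1^m)$ once and for all and to bound the three pointwise entropies in $\mathbf{I}(x_1^n;x_{n+1}^m)=\mathbf{H}(x_1^n)+\mathbf{H}(x_{n+1}^m)-\mathbf{H}(x_1^m)$ separately at this single order. For the two positive terms I would retain only the $k$-th summand of the PPM mixture (\ref{PPMPi}), giving the elementary lower bounds $\mathbf{\Pi}(x_1^n)\ge \frac{36}{\pi^4}(n+1)^{-2}(k+1)^{-2}\PPM_k(x_1^n)$ and its analogue for $x_{n+1}^m$ with $n+1$ replaced by $m-n+1$. Taking negative logarithms (and using $-\log\frac{36}{\pi^4}=2\log\frac{\pi^2}{6}$) yields
\begin{align}
\mathbf{H}(x_1^n)&\le -\log\PPM_k(x_1^n)+2\log(n+1)+2\log(k+1)+2\log\tfrac{\pi^2}{6},\nonumber\\
\mathbf{H}(x_{n+1}^m)&\le -\log\PPM_k(x_{n+1}^m)+2\log(m-n+1)+2\log(k+1)+2\log\tfrac{\pi^2}{6}.\nonumber
\end{align}
For the subtracted term I would instead use the defining inequality (\ref{MOrder}) of the universal Markov order at string $x_1^m$, which for $k=\mathbf{M}(x_1^m)$ gives directly the lower bound $\mathbf{H}(x_1^m)\ge (m-k)h_k(x_1^m)$.

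Next I would insert the upper half of the PPM bound (\ref{PPMbound}), applied to each substring, to replace $-\log\PPM_k(x_1^n)$ by $k\log D+(n-k)h_k(x_1^n)+D\,\card V_k(x_1^{n-1})\log[e^2n]$ and likewise for $x_{n+1}^m$. After collecting, the surviving empirical-entropy contribution is $(n-k)h_k(x_1^n)+(m-n-k)h_k(x_{n+1}^m)-(m-k)h_k(x_1^m)$. The key step---and the reason Theorem \ref{theoHSuperadditive} was set up in advance---is that its left inequality, rewritten as $(m-k)h_k(x_1^m)\ge (n-k)h_k(x_1^n)+k\,h_k(x_{n-k}^{n+k})+(m-n-k)h_k(x_{n+1}^m)$, forces this combination to be at most $-k\,h_k(x_{n-k}^{n+k})\le 0$. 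Thus every empirical entropy cancels, leaving only $2k\log D$, the two vocabulary terms, and the logarithmic corrections.

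It then remains to coarsen these. Since $x_1^{n-1}$ and $x_{n+1}^{m-1}$ are substrings of $x_1^m$, we have $V_k(x_1^{n-1}),V_k(x_{n+1}^{m-1})\subseteq V_k(x_1^m)$, so the two vocabulary terms are at most $2D\,\card V_\mathbf{M}(x_1^m)\log[e^2m]$ once $\log[e^2n]$ and $\log[e^2(m-n)]$ are raised to the larger $\log[e^2m]$. For $2k\log D$ I would invoke inequality (\ref{MlogN}) from the proof of Theorem \ref{theoMOrderEntropy}, namely $\mathbf{M}(x_1^m)<m\log m/\mathbf{H}(x_1^m)$, together with $\log m\le\log[e^2m]$, to obtain $2k\log D\le 2\frac{m\log D}{\mathbf{H}(x_1^m)}\log[e^2m]$. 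Finally, since $n+1,\ m-n+1,\ k+1\le e^2m$ and $\log\frac{\pi^2}{6}>0$, the corrections absorb as $2\log(n+1)+2\log(m-n+1)+4\log(k+1)\le 8\log[e^2m]$ and $4\log\frac{\pi^2}{6}\le 4\log\frac{\pi^2}{6}\log[e^2m]$. Adding the four groups reproduces exactly the claimed bound.

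I expect the only genuine subtlety to be the entropy cancellation through superadditivity; the remainder is bookkeeping. The minor points worth verifying are that the standing hypotheses $0\le k<n$ and $m-n<m$ are precisely what Theorem \ref{theoHSuperadditive} needs for the decomposition of $x_1^m$ at the split point $n$, and that the PPM bound (\ref{PPMbound}) is being applied to strings of length $n$ and $m-n$ (not $m$), which is why the logarithmic factors initially differ before being coarsened to $\log[e^2m]$.
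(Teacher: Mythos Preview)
Your proposal is correct and follows essentially the same route as the paper's own proof: fix $k=\mathbf{M}(x_1^m)$, upper-bound $\mathbf{H}(x_1^n)$ and $\mathbf{H}(x_{n+1}^m)$ by retaining only the $k$-th summand of the PPM mixture, lower-bound $\mathbf{H}(x_1^m)$ via the defining inequality of the universal Markov order, insert the PPM bound (\ref{PPMbound}) for each piece, cancel the empirical entropies through the superadditivity of Theorem~\ref{theoHSuperadditive}, merge the two vocabulary terms into $2D\,\card V_k(x_1^m)\log[e^2m]$, and finally absorb $2k\log D$ via (\ref{MlogN}) and the logarithmic corrections into the remaining constants. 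The only cosmetic difference is that the paper justifies $k+1\le e^2m$ by invoking (\ref{MOrderMaxRep}) (namely $k\le L(x_1^m)+1\le m$), whereas you use the standing hypothesis $k<n\le m$ directly; both are valid.
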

\begin{proof}
  Let $k=\mathbf{M}(x_1^m)$ and $C=\frac{\pi^4}{6^2}$.  By
  (\ref{PPMPi}), we obtain
  \begin{align}
    \mathbf{H}(x_1^n)\le
    \log C+2\log (n+1)+2\log (k+1)-\log\PPM_k(x_1^n).
  \end{align}
  In consequence, by inequalities (\ref{PPMbound}) and
  (\ref{HSuperadditive}) we obtain
  \begin{align}
    \mathbf{I}(x_1^n;x_{n+1}^m)
    &=
      \mathbf{H}(x_1^n)+\mathbf{H}(x_{n+1}^m)-\mathbf{H}(x_1^m)
      \nonumber\\
    &\le 2\log C+2\log (n+1)+2\log (m-n+1)+4\log (k+1)
      \nonumber\\
    &\qquad-\log\PPM_k(x_1^n)-\log\PPM_k(x_{n+1}^m)
      -(m-k)h_k(x_1^m)
    \nonumber\\
    &\le 2\log C+2\log (n+1)+2\log (m-n+1)+4\log (k+1)+2k\log D
    \nonumber\\
    &\qquad+(n-k)h_k(x_1^n)+D\card V_k(x_1^n)\log[e^2n]
    \nonumber\\
    &\qquad+(m-n-k)h_k(x_{n+1}^m)+D\card V_k(x_{n+1}^m)\log[e^2(m-n)]
    \nonumber\\
    &\qquad-(m-k)h_k(x_1^m)
    \nonumber\\
    &\le 2\log C+4\log (m+1)+4\log (k+1)+2k\log D
    \nonumber\\
    &\qquad+2D\card V_k(x_1^m)\log[e^2m]
    \nonumber\\
    &\le 2
      \kwad{D\card V_k(x_1^m)+
      \frac{m\log D}{\mathbf{H}(x_1^m)}+\log C+4}\log[e^2m]
      .
  \end{align}
  since $k\le L(x_1^m)+1\le m$ and
  $k<\frac{m\log m}{\mathbf{H}(x_1^m)}$ by inequalities
  (\ref{MOrderMaxRep}) and (\ref{MlogN}).
\end{proof}

A similar bound holds in expectation with no restriction on $n$.
\begin{theorem}
  \label{theoIboundE}
  Let $\mathbf{\Pi}$ be the PPM semi-distribution (\ref{PPMPi}) and let
  probability measure $P$ be arbitrary. Then for any $n\ge 1$, we have
  inequality
  \begin{align}
    \sred\mathbf{I}(X_1^n;X_{n+1}^{2n})
    \le
    2\sred\kwad{D\card V_\mathbf{M}(X_1^{2n})+
    \frac{4n\log D}{\mathbf{H}(X_1^{2n})}+4\log
    \frac{\pi^2}{6}+6}\log[2e^2n]
    .
  \end{align}
\end{theorem}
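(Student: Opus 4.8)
**The plan is to reduce Theorem \ref{theoIboundE} to the pointwise bound of Theorem \ref{theoIbound} by taking expectations, handling the missing restriction $\mathbf{M}(x_1^m)<n$ separately.**

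First I would set $m=2n$ and recall the pointwise inequality from Theorem \ref{theoIbound}, which is valid whenever $0\le\mathbf{M}(x_1^{2n})<n$. The natural strategy is to take the expectation $\sred$ of both sides over $X_1^{2n}$. On the event where the restriction $\mathbf{M}(X_1^{2n})<n$ holds, Theorem \ref{theoIbound} applies verbatim, and taking expectations turns the deterministic bound into the claimed bound in expectation---note that the target bound has $4n\log D/\mathbf{H}(X_1^{2n})$ where Theorem \ref{theoIbound} has $m\log D/\mathbf{H}(x_1^m)=2n\log D/\mathbf{H}(x_1^{2n})$, and the constant $4\log\frac{\pi^2}{6}+6$ is larger than $2(\log C+4)=2\log\frac{\pi^4}{6^2}+8=4\log\frac{\pi^2}{6}+8$; actually $4\log\frac{\pi^2}{6}+6$ is smaller, so the slack must come from the error term, which I will need to track carefully.

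The main obstacle is the event where $\mathbf{M}(X_1^{2n})\ge n$, on which Theorem \ref{theoIbound} does not apply. Here I would argue directly. Since $\mathbf{M}(x_1^{2n})$ is a universal Markov order, Theorem \ref{theoMOrderMaxRep} gives $\mathbf{M}(x_1^{2n})\le L(x_1^{2n})+1$, and by inequality (\ref{MlogN}) we have $\mathbf{M}(x_1^{2n})<2n\log(2n)/\mathbf{H}(x_1^{2n})$. On this event the pointwise mutual information $\mathbf{I}(X_1^n;X_{n+1}^{2n})=\mathbf{H}(X_1^n)+\mathbf{H}(X_{n+1}^{2n})-\mathbf{H}(X_1^{2n})$ must still be controlled; the cleanest route is to bound it using the PPM structure with a cruder order, or to absorb the contribution into the term $4n\log D/\mathbf{H}(X_1^{2n})$, which is exactly the inflated version of the Theorem \ref{theoIbound} term and presumably accounts for precisely this bad event. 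I would check that on $\klam{\mathbf{M}(X_1^{2n})\ge n}$ the mutual information is bounded by the same right-hand-side expression, using that the vocabulary size $\card V_\mathbf{M}(X_1^{2n})$ together with the maximal-repetition bound still dominates $\mathbf{I}$ via inequality (\ref{PPMbound}) applied with $k=\mathbf{M}(x_1^{2n})$.

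Combining the two events by the law of total expectation then yields the stated inequality, with the doubled coefficient $4n\log D/\mathbf{H}(X_1^{2n})$ and the adjusted additive constant providing the necessary room to cover the degenerate regime. I expect the delicate point to be verifying that the per-string bound on the bad event genuinely fits under the common right-hand side rather than requiring a separate additive term; this will rely on the fact that $n\le m=2n$ forces the two Shannon terms $\mathbf{H}(X_1^n)$ and $\mathbf{H}(X_{n+1}^{2n})$ to be comparable to $\mathbf{H}(X_1^{2n})$ up to the logarithmic corrections already present in (\ref{PPMbound}), so no new growth order appears.
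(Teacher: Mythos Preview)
Your outline matches the paper's approach: split on the event $\{\mathbf{M}(X_1^{2n})<n\}$, apply Theorem~\ref{theoIbound} on the good event, and handle the bad event separately. The gap is that your bad-event analysis is only a sketch and misses the two concrete ingredients the paper actually uses.

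First, to bound $\mathbf{I}(X_1^n;X_{n+1}^{2n})$ on the bad event you need the uniform upper bound
\[
\mathbf{H}(x_1^n)\le \log C + 4\log(n+1) + n\log D,\qquad C=\tfrac{\pi^4}{6^2},
\]
obtained by keeping the single term $k=n$ in (\ref{PPMPi}), where $\PPM_n(x_1^n)=D^{-n}$. This yields $\mathbf{I}\le 2[\log C+O(\log n)+n\log D]$ on the bad event. Your proposed route via (\ref{PPMbound}) with $k=\mathbf{M}(x_1^{2n})$ does not work here: on the bad event $k\ge n$, so (\ref{PPMbound}) is outside its range and in any case controls $-\log\PPM_k$, not $\mathbf{H}$.

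Second, the paper does \emph{not} attempt a pointwise fit on the bad event. It bounds the contribution in expectation via Markov's inequality combined with (\ref{MlogN}):
\[
n\,P\bigl(\mathbf{M}(X_1^{2n})\ge n\bigr)\le \sred\mathbf{M}(X_1^{2n})\le \sred\frac{2n\log 2n}{\mathbf{H}(X_1^{2n})}.
\]
Multiplying the uniform bound $\sim 2n\log D$ by this probability produces precisely the extra $\sred\frac{2n\log D}{\mathbf{H}(X_1^{2n})}\cdot 2\log[2e^2n]$ that explains the doubling from $2n$ to $4n$; the logarithmic leftovers go into the enlarged additive constant. (Your constant bookkeeping got tangled because you compared the outer factor $2(\log C+4)$ with the inner constant of the target; the correct comparison is $2\log\tfrac{\pi^2}{6}+4$ versus $4\log\tfrac{\pi^2}{6}+6$ inside the bracket.)

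Your alternative pointwise idea can in fact be salvaged, since on the bad event (\ref{MlogN}) forces $\mathbf{H}(X_1^{2n})<2\log 2n$, making $\frac{4n\log D}{\mathbf{H}(X_1^{2n})}\log[2e^2n]\ge 2n\log D$; but you would still need the uniform bound above, which you did not state.
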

\begin{proof}
  Let $C=\frac{\pi^4}{6^2}$. Since $\PPM_k(x_1^n)=D^{-n}$ for $k=n$
  then for $n\ge 1$ we obtain a uniform bound
  \begin{align}
    \label{Uniform}
    \log D\le \mathbf{H}(x_1^n)\le \log C+4\log (n+1)+n\log D.
  \end{align}
  Hence by Theorem \ref{theoIbound}, we have
  \begin{align}
    \sred\mathbf{I}(X_1^n;X_{n+1}^{2n})
    &\le
    2\sred\kwad{D\card V_\mathbf{M}(X_1^{2n})+
      \frac{2n\log D}{\mathbf{H}(X_1^{2n})}+\log C+4}\log[2e^2n]
      \nonumber\\
    &\quad+
      2\kwad{\log C+2\log (n+1)+n\log D}P(\mathbf{M}(X_1^{2n})\ge n)
      .
  \end{align}
  But by the Markov inequality and inequality (\ref{MlogN}),
  \begin{align}
    n P(\mathbf{M}(X_1^{2n})\ge n)\le \sred
    \mathbf{M}(X_1^{2n})
    \le  \sred\okra{\frac{2n}{\mathbf{H}(X_1^{2n})}}\log 2n
    .
  \end{align}
  Hence we obtain the claim.
\end{proof}

Theorems \ref{theoIbound} and \ref{theoIboundE} complement and
somewhat strengthen a series of theorems proved in \cite{Debowski18},
which allow to effectively upper bound the power-law growth of mutual
information for two strings drawn from a stationary ergodic process in
terms of the vocabulary size of a computable order.  To capture the
power-law growth of a real function $s(n)$ of natural numbers, let us
denote the Hilberg exponent
\begin{align}
  \hilberg_{n\to\infty} s(n):=\limsup_{n\to\infty}
  \frac{\log^+ s(n)}{\log n},
\end{align}
where $\log^+ x:=\log(x+1)$ for $x\ge 0$ and $\log^+ x:=0$ for $x<0$
\cite{Debowski18}.
We have, e.g., $\hilberg_{n\to\infty} n^\beta=\beta$ for $\beta\ge 0$.

Subsequently, let $\mathbf{H}^P(x_1^n):=-\log P(x_1^n)$ be the true
pointwise entropy, let $\mathbf{H}^K(x_1^n)$ be the prefix-free
Kolmogorov complexity of $x_1^n$, and let
$\mathbf{H}(x_1^n):=-\log \mathbf{\Pi}(x_1^n)$ for the PPM semi-distribution
(\ref{PPMPi}). Then for any stationary measure $P$, we have
\begin{align}
  &\hilberg_{n\to\infty} \kwad{\sred\mathbf{H}^P(X_1^n)-nh^P}
    =\hilberg_{n\to\infty} \sred\mathbf{I}^P(X_1^n;X_{n+1}^{2n})  
    \nonumber\\
  &\le
    \hilberg_{n\to\infty} \kwad{\sred\mathbf{H}^K(X_1^n)-nh^P}
    =\hilberg_{n\to\infty} \sred\mathbf{I}^K(X_1^n;X_{n+1}^{2n})
    \nonumber\\
  &\le
    \hilberg_{n\to\infty} \kwad{\sred\mathbf{H}(X_1^n)-nh^P}
    \le
    \hilberg_{n\to\infty} \sred\mathbf{I}(X_1^n;X_{n+1}^{2n})
    ,
    \label{MIRedundancy}
\end{align}
as shown in Theorem A1 in \cite{Debowski18}.

In the following, we can strengthen Theorem A8 of \cite{Debowski18},
which asserts that
\begin{align}
  \hilberg_{n\to\infty} \sred\mathbf{I}(X_1^n;X_{n+1}^{2n})
    \le \hilberg_{n\to\infty} \sred
      \kwad{\mathbf{K}(X_1^n)+\card V_\mathbf{K}(X_1^n)}
      .
\end{align}
Our strengthened proposition applies the universal Markov order
vocabulary.
\begin{theorem}
  \label{theoMIWords}
  Consider a stationary probability measure $P$ on infinite sequences
  over a finite alphabet.  Let $\mathbf{\Pi}$ be the PPM semi-distribution
  (\ref{PPMPi}). Denote the R\'enyi block entropy
  $R^P_n:=-\log\sred P(X_1^n)$.  If the R\'enyi entropy rate is
  strictly positive, i.e., $\inf_{n\ge 1} R^P_n/n>0$ then
  \begin{align}
    \hilberg_{n\to\infty} \sred\mathbf{I}(X_1^n;X_{n+1}^{2n})
    \le \hilberg_{n\to\infty} \sred
      \card V_\mathbf{M}(X_1^n)
      .
    \label{MIWords}
 \end{align}
\end{theorem}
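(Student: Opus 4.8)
The plan is to start from the expectation bound of Theorem~\ref{theoIboundE} and to control the Hilberg exponent of each summand on its right-hand side. First I would record the elementary bookkeeping properties of the functional $\hilberg$: for nonnegative sequences it is subadditive in the sense that $\hilberg_{n\to\infty}\kwad{s_1(n)+s_2(n)}\le\max\klam{\hilberg_{n\to\infty} s_1(n),\hilberg_{n\to\infty} s_2(n)}$, it is unchanged under multiplication by a positive constant or by a polylogarithmic factor such as $\log[2e^2 n]$ (since $\log\log n/\log n\to 0$), and it satisfies $\hilberg_{n\to\infty} s(2n)\le\hilberg_{m\to\infty} s(m)$ because $\log(2n)/\log n\to 1$ and the even indices form a subsequence. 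Granting these, applying $\hilberg$ to the bound of Theorem~\ref{theoIboundE} reduces the claim to estimating separately the Hilberg exponents of $\sred D\card V_\mathbf{M}(X_1^{2n})$, of $\sred\frac{4n\log D}{\mathbf{H}(X_1^{2n})}$, and of the additive constants.

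The vocabulary term is immediate: since $D$ is a constant and the even indices are a subsequence, $\hilberg_{n\to\infty}\sred D\card V_\mathbf{M}(X_1^{2n})\le\hilberg_{n\to\infty}\sred\card V_\mathbf{M}(X_1^n)$, which is exactly the right-hand side of (\ref{MIWords}); the additive constants contribute Hilberg exponent $0$. Moreover this target exponent is itself nonnegative, because $\card V_\mathbf{M}(x_1^n)\ge 1$ forces $\log^+\sred\card V_\mathbf{M}(X_1^n)>0$, so it will absorb any summand of exponent $0$. Thus everything hinges on showing that the remaining term $\sred\frac{4n\log D}{\mathbf{H}(X_1^{2n})}$ has Hilberg exponent $0$, equivalently that $\sred\frac{1}{\mathbf{H}(X_1^{2n})}=O(1/n)$; this is where the positivity of the R\'enyi entropy rate enters, and it is the main obstacle.

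To estimate $\sred\frac{1}{\mathbf{H}(X_1^m)}$ I would use the uniform lower bound $\mathbf{H}(x_1^m)\ge\log D$ from (\ref{Uniform}) to split the expectation at a threshold $t$: the contribution of the event $\mathbf{H}(X_1^m)>t$ is at most $1/t$, while $\sred\frac{1}{\mathbf{H}(X_1^m)}\boole{\mathbf{H}(X_1^m)\le t}\le\frac{1}{\log D}P(\mathbf{\Pi}(X_1^m)\ge 2^{-t})$. The Kraft inequality for $\mathbf{\Pi}$ restricted to length $m$ forces at most $2^t$ strings $w\in\mathbb{X}^m$ to satisfy $\mathbf{\Pi}(w)\ge 2^{-t}$, and by Cauchy--Schwarz the $P$-mass of any set $A\subseteq\mathbb{X}^m$ obeys $\sum_{w\in A}P(w)\le\sqrt{\card A}\,\bigl(\sum_{w}P(w)^2\bigr)^{1/2}=\sqrt{\card A}\,2^{-R^P_m/2}$. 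Writing $h_2:=\inf_{n\ge 1}R^P_n/n>0$, so that $R^P_m\ge h_2 m$, this yields $P(\mathbf{\Pi}(X_1^m)\ge 2^{-t})\le 2^{(t-h_2 m)/2}$. Choosing $t=h_2 m/2$ makes the first contribution exponentially small and the second $O(1/m)$, giving $\sred\frac{1}{\mathbf{H}(X_1^m)}=O(1/m)$ and hence $\sred\frac{4n\log D}{\mathbf{H}(X_1^{2n})}=O(1)$, of Hilberg exponent $0$.

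Combining the three estimates through the subadditivity of $\hilberg$ would then give $\hilberg_{n\to\infty}\sred\mathbf{I}(X_1^n;X_{n+1}^{2n})\le\max\klam{\hilberg_{n\to\infty}\sred\card V_\mathbf{M}(X_1^n),0}=\hilberg_{n\to\infty}\sred\card V_\mathbf{M}(X_1^n)$, which is (\ref{MIWords}). The only genuinely substantive step is the R\'enyi-entropy estimate of the previous paragraph; the rest is routine manipulation of the Hilberg exponent, and I would expect the main care to go into justifying that the polylogarithmic factor $\log[2e^2 n]$ and the $2n$-versus-$n$ rescaling do not inflate the exponent.
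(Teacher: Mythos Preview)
Your proof is correct and follows the same overall strategy as the paper: invoke Theorem~\ref{theoIboundE}, observe that the polylogarithmic factor, the constants, and the $2n$-versus-$n$ rescaling do not affect the Hilberg exponent, and reduce everything to showing $\sred\frac{1}{\mathbf{H}(X_1^m)}=O(1/m)$ under the positive R\'enyi entropy rate. (One harmless slip: with $t=h_2 m/2$ it is the \emph{second} contribution, the one carrying $2^{(t-h_2m)/2}$, that is exponentially small, while the first, $1/t$, is $O(1/m)$.)

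Where you diverge from the paper is in the proof of this key estimate. The paper writes $\sred\frac{1}{\mathbf{H}(X_1^n)}$ via the layer-cake formula $\int_{\log D}^\infty P(\mathbf{H}(X_1^n)\le u)\,u^{-2}\,du$, then bounds $P(\mathbf{H}(X_1^n)\le u)$ by inserting the true probability $P(X_1^n)$ as a pivot and applying two Markov inequalities with a free parameter $\alpha$; optimizing over $\alpha$ yields $P(\mathbf{H}(X_1^n)\le u)\le 2\okra{2^{u-R^P_n}}^{1/2}\wedge 2$ and finally $\sred\frac{1}{\mathbf{H}(X_1^n)}\le 4/R^P_n$. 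Your argument is more combinatorial: you split at a single threshold, use Kraft to count strings with $\mathbf{\Pi}(w)\ge 2^{-t}$, and then Cauchy--Schwarz to convert that count into a $P$-probability bound involving $R^P_m$. Both routes yield the same $O(1/R^P_m)$ conclusion; the paper's gives an explicit constant $4/R^P_n$ uniformly in $n$ large, while yours is arguably more transparent and avoids the change of variables and the $\min_\alpha$ optimization.
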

\begin{proof}
  The inequality follows by Theorem \ref{theoIboundE} and the lower
  bound in (\ref{Uniform}) since using the Markov inequality we can
  further bound
\begin{align}
  \sred\okra{\frac{1}{\mathbf{H}(X_1^n)}}
  &=
    \int_0^\infty P\okra{\frac{1}{\mathbf{H}(X_1^n)}\ge p}dp
    =
    \int_{\log D}^\infty P\okra{\mathbf{H}(X_1^n)\le u}\frac{du}{u^2}
    \nonumber\\
  &\le
    \int_{\log D}^\infty \min_\alpha
    \kwad{
    P\okra{\frac{2^{-\mathbf{H}(X_1^n)}}{P(X_1^n)2^\alpha}\ge 1}
    +
    P\okra{\frac{P(X_1^n)2^\alpha}{2^{-u}}\ge 1}  
    }\frac{du}{u^2}
    \nonumber\\
  &\le
    \int_{\log D}^\infty \min_\alpha
    \kwad{
    2^{-\alpha}
    +
    \sred P(X_1^n)2^{u+\alpha}
    }\wedge 2\frac{du}{u^2}
    \nonumber\\
  &\le
    2\int_{\log D}^\infty \kwad{\sred P(X_1^n)2^{u}}^{1/2}\wedge 1\frac{du}{u^2}
    \nonumber\\
  &\le
    2\int_0^{R^P_n}\frac{du}{(R^P_n)^2}
    +
    2\int_{R^P_n}^\infty\frac{du}{u^2}
   \le
    \frac{4}{R^P_n}
    \label{InverseH}
\end{align}
if $f(R^P_n)\ge f(\log D)$ for the convex function $f(u):=2^{u/2}/u^2$.
\end{proof}

Compared to the results of \cite{Debowski18} applying the vocabulary
size of the Krichevsky-Trofimov order, Theorems \ref{theoIbound}
through \ref{theoMIWords} allow to take the vocabulary size of a
smaller order, equal to the universal Markov order. This smaller order
enjoys a neat interpretation of a consistent Markov order estimator
and is usually negligible compared to the vocabulary size because of
the upper bound (\ref{InverseH})---in contrast to the unknown
worst-case behavior of the Krichevsky-Trofimov order.  Moreover,
according to Theorem \ref{theoMIWords} and inequalities
(\ref{MIRedundancy}), if the true mutual information for a stochastic
source grows like a power law, which holds for certain
non-finite-state processes
\cite{Debowski11b,Debowski12,Debowski17,Debowski18}, then the
vocabulary size of the universal Markov order must also grow like a
power law. Analogous results were obtained not only for the PPM
semi-distribution in \cite{Debowski18} but, earlier, also for minimal
grammar-based codes in \cite{Debowski11b}, which unfortunately seem to
be computationally intractable \cite{CharikarOthers05}.

Let us also add that there is a hypothesis by Hilberg \cite{Hilberg90}
that the mutual information for natural language grows like a
power-law indeed, which can be further supported not only by large
scale computational experiments
\cite{TakahiraOthers16,HahnFutrell19,BravermanOther19,KaplanOther20}
but also by a reasoning linking semantics and non-ergodicity. This
semantic interpretation of non-ergodicity provides a lower bound for
the mutual information and, combined with propositions such as Theorem
\ref{theoMIWords}, it yields an intriguing claim that the number of
independent timeless facts described by a random text must be roughly
smaller than the number of distinct words in the text, see
\cite{Debowski11b,Debowski18} for the technical details. Since the
number of distinct words grows roughly like a power of the text size
for natural language, which is called Herdan's or Heaps' law
\cite{Herdan64,Heaps78}, one may hope that the number of independent
timeless facts described by texts in natural language is also
large. In particular, natural language cannot be not a finite-state
process.

\section*{Acknowledgment}

We thank Boris Ryabko for pointing out to us his very relevant
references, which we failed to cite in the first draft of this paper.
We apologize for this unintentional omission.

\bibliographystyle{IEEEtran}

\bibliography{0-journals-abbrv,0-publishers-abbrv,ai,mine,tcs,ql,books,nlp}

\end{document}